\documentclass[DIV=11,a4paper]{artclcomp}


\usepackage{times}
\usepackage{calc}
\usepackage{enumitem}
\usepackage{eurosym}
\usepackage{textcomp}
\usepackage{mathrsfs}
\usepackage{graphicx}
\usepackage{enumitem}
\usepackage{stmaryrd}
\usepackage{epsfig}
\usepackage[usenames,dvipsnames]{xcolor}



\newcommand{\set}[1]{\ensuremath{ \{ #1 \} }}
\newcommand{\R}{\mathbb{R}}

\usepackage[hyperref,amsmath,thmmarks]{ntheorem}

\theoremseparator{.}




\theoremsymbol{\ensuremath{\blacklozenge}}
\theoremheaderfont{\itshape}


\numberwithin{equation}{section}

\usepackage{lipsum}
\usepackage{todonotes}
\usepackage{graphicx}
\usepackage{amsfonts}
\usepackage{accents}
\usepackage{xspace}
\usepackage{dsfont}

\setlength{\parindent}{0cm}
\setlength{\parskip}{0.25cm}%
\allowdisplaybreaks[1] 

\def\set{\ensuremath{\mathcal{S}}}
\def\bx{\ensuremath{\mathbf{x}}}
\def\bu{\ensuremath{\mathbf{u}}}

\def\FH{Fr\'echet--Hoeffding\xspace}

\def\ud{\mathrm{d}}
\def\cD{\ensuremath{\mathcal{D}}}

\def\P{\ensuremath{\mathbb{P}}\xspace}
\def\X{\ensuremath{\mathbf{X}}\xspace}

\newcommand{\upQ}{\overline{Q}} 
\newcommand{\downQ}{\underline{Q}}

\newcommand{\refC}{C^*} 
\newcommand{\lo}{\preceq}

\DeclareMathAccent{\what}{\mathord}{largesymbols}{"62}

\DeclareFontFamily{U}{mathx}{\hyphenchar\font45}
\DeclareFontShape{U}{mathx}{m}{n}{
      <5> <6> <7> <8> <9> <10>
      <10.95> <12> <14.4> <17.28> <20.74> <24.88>
      mathx10
      }{}
\DeclareSymbolFont{mathx}{U}{mathx}{m}{n}
\DeclareFontSubstitution{U}{mathx}{m}{n}
\DeclareMathAccent{\widecheck}{0}{mathx}{"71}

\begin{document}

\title{Model-free bounds on Value-at-Risk using extreme value information and statistical distances}

\author[a,1,s]{Thibaut Lux}
\author[b,2,s]{Antonis Papapantoleon}

\address[a]{{Department of Finance, Vrije Universiteit Brussel, Pleinlaan 2, 1050 Brussels, Belgium}}
\address[b]{{Department of Mathematics, National Technical University of Athens, Zografou Campus, 15780 Athens, Greece }}

\eMail[1]{tlux@consult-lux.de}
\eMail[2]{papapan@math.ntua.gr}

\myThanks[s]{We thank Peter Bank, Carole Bernard, Fabrizio Durante, Ludger R\"uschendorf, Kirstin Strokorb, Steven Vanduffel and Ruodu Wang for useful discussions during the work on these topics. TL gratefully acknowledges the financial support from the DFG Research Training Group 1845 ``Stochastic Analysis with Applications in Biology, Finance and Physics''. }

\abstract{
We derive bounds on the distribution function, therefore also on the Value-at-Risk, of $\varphi(\mathbf X)$ where $\varphi$ is an aggregation function and $\mathbf X = (X_1,\dots,X_d)$ is a random vector with known marginal distributions and partially known dependence structure.
More specifically, we analyze three types of available information on the dependence structure: 
First, we consider the case where extreme value information, such as the distributions of partial minima and maxima of $\mathbf X$, is available. 
In order to include this information in the computation of Value-at-Risk bounds, we utilize a reduction principle that relates this problem to an optimization problem over a standard Fr\'echet class, which can then be solved by means of the rearrangement algorithm or using analytical results. 
Second, we assume that the copula of $\mathbf X$ is known on a subset of its domain, and finally we consider the case where the copula of $\mathbf X$ lies in the vicinity of a reference copula as measured by a statistical distance. 
In order to derive Value-at-Risk bounds in the latter situations, we first improve the \FH bounds on copulas so as to include this additional information on the dependence structure. 
Then, we translate the improved \FH bounds to bounds on the Value-at-Risk using the so-called improved standard bounds.
In numerical examples we illustrate that the additional information typically leads to a significant improvement of the bounds compared to the marginals-only case.}

\keyWords{Value-at-Risk bounds, dependence uncertainty, copulas, improved \FH bounds, distribution of maxima and minima, reduction principle, distance to reference copula, rearrangement algorithm.}

\keyAMSClassification{91B30, 62H05, 60E05, 60E15.}
\keyJELClassification{G32, C52, C60.}

\thanksColleagues{A previous version was entitled ``\textit{Model-free bounds on Value-at-Risk using partial dependence information}''.}

\date{} \maketitle \frenchspacing 

\section{Introduction}

The evaluation of multivariate risks under model uncertainty has become a central issue in several fields of science, ranging from hydrology and engineering to insurance and finance. 
In insurance and finance, this has been in parts driven by the changing regulations requiring the quantification of model uncertainty in risk management; see \textit{e.g.} the notions and guidelines regarding primary and secondary uncertainty of natural catastrophe models in the Swiss Solvency Test or the treatment of model uncertainty in the Solvency II guidelines for internal model approval. 
Measuring risk under uncertainty often relates to the computation of bounds on probabilities of the form $\mathbb{P}(\varphi(\mathbf X)\le \cdot)$, where $\mathbf X = (X_1,\dots,X_d)$ is an $\mathbb{R}^d$-valued random vector and $\varphi\colon\mathbb{R}^d\to\mathbb{R}$ an aggregation function. 
Here $\mathbf X$ can be thought of as a vector modeling $d$ risks in a portfolio, while typical examples of aggregation functions $\varphi$ are the sum, the max and the min operators. 

Models for the distribution of the risk factors $\mathbf X$ are exposed to two types of model risk, namely, the risk that the one-dimensional distributions of the individual constituents $X_1,\dots,X_d$ are misspecified and, on the other hand, the risk that the dependence structure between the components is not appropriate. 
The latter situation is referred to as \textit{dependence uncertainty} in the literature. 
While in many regulatory frameworks, the measurement of dependence uncertainty extends merely to the consideration of uncertain correlations, authorities are aware that the choice of the underlying dependence structure, \textit{i.e.} the copula, entails risks that are possibly far more significant than those ensuing from misspecified correlations; see \textit{e.g.} \citet{CEIOPS}. 
However, in view of a lack of parsimonious and numerically tractable methods to quantify risks due to dependence uncertainty, a standard framework in this respect seems currently impracticable. 

Against this backdrop, we focus, in this paper, on risk measurement under dependence uncertainty, \textit{i.e.} we assume that the marginal distributions of the components $X_i\sim F_i$ for $i=1,\dots,d$ are known, while the dependence structure between the components of $\mathbf{X}$ is unknown or only partially known. At first, we derive bounds on the distribution function of $\varphi(\mathbf X)$ using the available information on the distribution of $\mathbf X$. Then, by inversion, the bounds on the distribution of $\varphi(\mathbf X)$ can be translated immediately into bounds on the Value-at-Risk (VaR) of $\varphi(\mathbf X)$. 

A significant part of the literature focuses on the situation where only the marginals $F_1,\dots,F_d$ are known and no information on the dependence structure of $\mathbf{X}$ is available. 
In this case, explicit bounds on the distribution function of the sum of two random variables, \textit{i.e.} $\varphi(\mathbf{X}) = X_1+X_2$, were derived by \citet*{makarov1981} and for more general functions $\varphi$ by \citet*{rueschendorf1981} in the early 1980's. 
These results were later generalized for functions of more than two random variables, for instance by \citet*{denuit1999} for the sum and by \citet*{embrechts2003} and \citet*{embrechts2006} for more general aggregation functions; see also \citet{Cheung_Lo_2013}. 
These bounds however may fail to be sharp. 
Therefore, numerical schemes to compute sharp distributional bounds have become increasingly popular. 
The rearrangement algorithm, which was introduced by \citet*{puccetti2012} and \citet*{embrechts2013}, represents an efficient method to approximate sharp bounds on the VaR of the sum $X_1+\cdots+X_d$ under additional requirements on the marginal distributions $F_1,\dots,F_d$. 
Moreover, sharp analytical VaR bounds in the marginals-only case have been obtained in the literature under certain assumptions on the marginal distributions; see \textit{e.g.} \citet{rueschendorf1982}, \citet{embrechts2006}, \citet{puccetti2012b}, \citet{wang2013}, \citet{bernard2014} as well as the references therein. 
However, the complete absence of information on the dependence structure typically leads to very wide bounds that are not sufficiently informative for practical applications; see \textit{e.g.} \citet*{bernard2015}. 
Besides, a complete lack of information about the dependence structure of $\mathbf{X}$ is often unrealistic, since quantities such as correlations or the values of the distribution function of $\mathbf{X}$ at certain points can be estimated with a sufficient degree of accuracy. 
This calls for methods to account for additional information on the dependence structure in the computation of risk bounds. 

A variety of analytical and numerical approaches to derive risk bounds including additional dependence information have been recently developed. 
Analytical bounds were derived by \citet*{embrechts2003} and \citet*{embrechts2006} for the case that a lower bound on the copula of $\mathbf{X}$ is given. 
\citet*{rueschendorf1991}, \citet*{embrechts2010} and \citet*{puccetti2012b} established bounds when the laws of some lower dimensional marginals of $\mathbf{X}$ are known. Analytical bounds that account for positive or negative dependence assumptions were presented in \citet*{embrechts2003} and \citet*{rueschendorf2005}. 
\citet*{bernard2015b} derived risk bounds when an upper bound on the variance of $\varphi(\mathbf X)$ is prescribed, and presented a numerical scheme to efficiently compute these bounds. 
Moreover, numerical and analytical methods to obtain risk bounds in factor models were presented by \citet{bernard2016}, while \citet*{bernard2015} considered the case where the distribution of $\mathbf X$ is known only on a subset of its domain and established a version of the rearrangement algorithm to account for this type of dependence information. 
A detailed account of this literature appears in \citet*{rueschendorf2016}.

In this paper we develop alternative approaches to compute VaR bounds for aggregations of multiple risks in the presence of dependence uncertainty. 
After recalling several definitions and useful results in Subsection \ref{sec:copulas}, in Subsection \ref{boundsOnVar} we revisit the standard and improved standard bounds on VaR and provide a direct derivation of the improved standard bounds when $\varphi = \max$ or $\varphi = \min$.
In Section \ref{prescribedMax} we utilize a reduction principle to account for extreme value information, such as the distribution of partial minima or maxima of the risk vector $\mathbf X$, in the computation of risk bounds for the sum $X_1+\cdots+X_d$. 
The term partial maxima hereby refers to the maximum of lower dimensional marginals of $\mathbf X$, \textit{i.e.} $\max\{X_{i_1},\dots,X_{i_n}\}$ for $1\leq i_1\leq\cdots\leq i_n\leq d$, and analogously for the minimum. 
We thereby interpolate between the marginals-only case and the situation where the distributions of the lower-dimensional marginals of $\mathbf X$ are completely specified; \textit{cf.} \citep{embrechts2010, puccetti2012b}.

In Section \ref{boundsOnCopula} we present an approach to compute VaR bounds for general aggregation functions $\varphi$ including two different types of dependence information. 
First, we consider the situation where the copula $C$ of the risk vector $\mathbf X$ coincides with a reference model on a subset $\set$ of its domain, \textit{i.e.} it holds that $C(\bx) = C^*(\bx)$ for all $\bx\in\set$ and a reference copula $C^*$. 
Applying results from \citet{lux2016} and the improved standard bounds of \citet{embrechts2003} and \citet{embrechts2006} we derive bounds on VaR using the available information on the subset $\set$. 
This relates to the \textit{trusted region} in \citet*{bernard2015}, although the methods are different. 
The second type of dependence information corresponds to $C$ lying in the vicinity of a reference copula $C^*$ as measured by a statistical distance $\cD$. 
In this case we establish improved \FH bounds on the set of all (quasi-)copulas $C$ in the $\delta$-neighborhood of the reference model $C^*$, \textit{i.e.} for all $C$ such that $\cD(C,C^*)\leq\delta$. 
Our method applies to a large class of statistical distances such as the Kolmogorov--Smirnov or the Cram\'er--von Mises distances. 
We then use the improved standard bounds of \citep{embrechts2003, embrechts2006} in order to translate the improved \FH bounds into bounds on the VaR of $\varphi(\mathbf X)$. 

Finally, in Section \ref{numerics} we present several applications of our results in risk measurement. 
The computational results show that the additional dependence information typically leads to a significant improvement of the VaR bounds when compared to the marginals-only case.
Moreover, the VaR bounds using information on the partial maxima are becoming tighter as the confidence level increases, which is in contrast to related results in the literature, and constitutes an advantage of this methodology.

\section{Preliminaries}
\label{setting}

In this section we introduce the notation and some basic results that are used throughout this work. 
A comprehensive introduction to copulas in the context of risk aggregation can be found, for example, in \citet*{embrechts2015} or \citet*{rueschendorf2015}.

Let $d\geq2$ be an integer. In the following, $\mathbb{I}$ denotes the unit interval $[0,1]$, while boldface letters, \textit{e.g.} $\mathbf{u}$, $\mathbf{v}$ or $\mathbf{x}$, denote vectors in $\mathbb{I}^d$ or $\mathbb{R}^d$. Moreover, $\mathbf 1$ denotes the $d$-dimensional vector with all entries equal to one, \textit{i.e.} $\mathbf 1=(1,\dots,1)$.

\subsection{Copulas and \FH bounds}
\label{sec:copulas}

\begin{definition}
A function $Q\colon\mathbb{I}^d\to\mathbb{I}$ is a $d$-\textit{quasi-copula} if the following properties hold:
\begin{enumerate}[label={$(\mathbf{QC1})$},leftmargin=!,labelwidth=\widthof{\bfseries XXXX}]
\item $Q$ satisfies, for all $i\in\{1,\dots,d\}$, the boundary conditions 
      \label{cond:QC1}
      $$Q(u_1,\dots,u_i = 0,\dots,u_d)=0 
      \quad\text{ and }\quad
      Q(1,\dots,1,u_i,1,\dots,1) = u_i.$$ 
\end{enumerate}
\begin{enumerate}[label={$(\mathbf{QC2})$},leftmargin=!,labelwidth=\widthof{\bfseries XXXX}]
\item $Q$ is non-decreasing in each argument. \label{cond:QC2}
\end{enumerate}
\begin{enumerate}[label={$(\mathbf{QC3})$},leftmargin=!,labelwidth=\widthof{\bfseries XXXX}]
\item $Q$ is Lipschitz continuous, \textit{i.e.} for all 
      $\mathbf{u},\mathbf{v}\in\mathbb{I}^d$\label{cond:QC3}
      $$|Q(u_1,\dots,u_d)-Q(v_1,\dots,v_d)|\leq\sum_{i=1}^d |u_i-v_i|.$$ 
\end{enumerate}
Moreover, $Q$ is a $d$-\textit{copula} if
\begin{enumerate}[label={$(\mathbf{QC4})$},leftmargin=!,labelwidth=\widthof{\bfseries XXXX}]
\item $Q$ is $d$-increasing. \label{cond:QC4}
\end{enumerate}
\end{definition}

We denote the set of all $d$-quasi-copulas by $\mathcal{Q}^d$ and the set of all $d$-copulas by $\mathcal{C}^d$. 
Obviously $\mathcal{C}^d\subset\mathcal{Q}^d$. 
We will simply refer to a $d$-(quasi-)copula as a (quasi-)copula if the dimension is clear from the context. 

Let $C$ be a $d$-copula and consider $d$ univariate probability distribution functions $F_1,\dots,F_d$. 
Then $F(x_1,\dots,x_d):=C(F_1(x_1),\dots,F_d(x_d))$, for all $\mathbf{x}\in\mathbb{R}^d$, defines a $d$-dimensional distribution function with univariate margins $F_1,\dots,F_d$. 
The converse also holds by Sklar's Theorem, \textit{i.e.} for each $d$-dimensional distribution function $F$ with univariate marginals $F_1,\dots,F_d$, there exists a copula $C$ such that $F(x_1,\dots,x_d) = C(F_1(x_1),\dots,F_d(x_d))$ for all $\mathbf{x}\in\mathbb{R}^d$; see \citet{sklar1959}. 
In this case, the copula $C$ is unique if the marginals are continuous. 

The \textit{survival function} of a $d$-copula $C$ is defined as follows:
$$\widehat{C}(u_1,\dots,u_d) 
  := V_C([u_1,1]\times\cdots\times[u_d,1]),\quad \mathbf{u} \in \mathbb{I}^d,$$
where $V_C(H)$ denotes the $C$-volume of the set $H$.  
The function $\widehat{C}(\mathbf 1-\mathbf u)$, for $\mathbf{u} \in \mathbb{I}^d$, is again a copula, namely the \textit{survival copula} of $C$; see \textit{e.g.} \citet*{georges2001}.
Note that for a distribution function $F$ of a random vector $(X_1,\dots,X_d)$ with marginals $F_1,\dots,F_d$ and a corresponding copula $C$ such that $F(x_1,\dots,x_d) = C(F_1(x_1),\dots,F_d(x_d))$ it holds that 
\begin{align}
\label{survivalCopulaProbability}
\mathbb{P}(X_1>x_1,\dots,X_d>x_d) = \widehat{C}(F_1(x_1),\dots,F_d(x_d)).
\end{align}
The map $\widehat{Q}$ could be defined analogously for quasi-copulas $Q$, however the function $\widehat{Q}(\mathbf 1-\mathbf u)$ is not necessarily a quasi-copula again. 
Therefore, we introduce the term \textit{quasi-survival functions} to refer to functions $\widehat{Q}:\mathbb{I}^d\to\mathbb{I}$ such that $\mathbf u\mapsto\widehat{Q}(\mathbf 1-\mathbf u)$ is again a quasi-copula. 
The set of $d$-quasi-survival functions is denoted by $\widehat{\mathcal{Q}}^d$.

\begin{definition}
Let $Q,Q'$ be $d$-quasi-copulas. 
$Q'$ is greater than $Q$ in the \textit{lower orthant order}, denoted by $Q \preceq Q'$, if $Q(\mathbf{u})\leq Q'(\mathbf{u})$ for all $\mathbf{u}\in\mathbb{I}^d$. 
\end{definition}

The well-known Fr\'{e}chet--Hoeffding theorem establishes the minimal and maximal bounds on the set of quasi-copulas in the lower orthant order. 
In particular, for each $Q\in\mathcal{Q}^d$, it holds that
$$W_d(\mathbf{u})
  := \max\Big\{0,\sum_{i=1}^d u_i - d + 1\Big\} 
  \leq   Q(\mathbf{u}) \leq \min\{u_1,\dots,u_d\} 
  =: M_d(\mathbf{u}),$$
for all $\mathbf{u}\in\mathbb{I}^d$, \textit{i.e.} $W_d\preceq Q\preceq M_d$, where $W_d$ and $M_d$ are the lower and upper Fr\'{e}chet--Hoeffding bounds respectively. 
The properties of the \FH bounds carry over to the set of survival copulas in a straightforward way, hence one obtains similarly for any $C\in\mathcal{C}^d$ the following bounds:
$$W_d(\mathbf 1 - \bu) \le \widehat{C}(\bu) \le
  M_d(\mathbf 1 - \bu), \qquad\text{for all }\mathbf{u}\in\mathbb{I}^d.$$

\subsection{Bounds on Value-at-Risk}
\label{boundsOnVar}

The problem of computing bounds on the probability of a function of random variables, or equivalently on their Value-at-Risk, in the presence of dependence uncertainty has a long history and many approaches to its solution have emerged. 
In the situation of complete dependence uncertainty, where only the marginals $F_1,\dots,F_d$ are known and one has no information about the copula of $\mathbf{X}$, bounds for the quantiles of the sum $X_1+\cdots+X_d$ were derived in a series of papers, starting with the results by \citet{makarov1981} and \citet{rueschendorf1982} for $d=2$, and their extensions for $d>2$ by \citet{frank1987}, \citet{denuit1999} and \citet{embrechts2003}. 
These bounds are in the literature referred to as \textit{standard bounds} and they are given by
\begin{align}\label{standardBounds}
\begin{split}
\max\Big\{\sup_{\mathcal{U}(s)}\Big(F^-_1(u_1)+\sum_{i=2}^d F_i(u_i)\Big)-d+1,0\Big\} 
	&\leq \P(X_1+\cdots+X_d<s)\\ 
	&\qquad \leq\min\Big\{\inf_{\mathcal{U}(s)}\sum_{i=1}^d F^-_i(u_i),1\Big\},
\end{split}
\end{align}
where $\mathcal{U}(s) = \{(u_1,\dots,u_d)\in\mathbb{R}^d\colon u_1+\cdots+u_d=s\}$ and $F_i^-$ denotes the left-continuous version of $F_i$. 
These bounds hold for all random variables $\mathbf{X}$ with marginals $F_1,\dots,F_d$, and the corresponding bounds for the VaR of the sum $X_1+\cdots+X_d$ are given by the respective inverse functions. 
It was shown independently in \citep{makarov1981} and \citep{rueschendorf1982} that the bounds are sharp for $d=2$, in the sense that there exists a distribution for $\mathbf{X}$ such that the sum of its constituents attains the upper and the lower bound. 
The standard bounds may however fail to be sharp in higher dimensions. 

\citet{embrechts2003} and \citet{embrechts2006} derived an improvement of the standard bounds that accounts for a lower bound on the copula of $\mathbf{X}$ or its survival function. 
This improvement is essential for the results in the present work, since it relates the problem of computing improved VaR bounds in the presence of additional dependence information to the task of improving the \FH bounds on copulas. 
The improvement of the `classical' \FH bounds by using additional, partial information on the dependence structure has attracted some attention in the literature lately, see \textit{e.g.} \citet{nelsen2006}, \citet{tankov2011}, \citet{lux2016} and also \citet{Rachev_Rueschendorf_1994}.

Let $\X$ be a random vector with marginals $F_1,\dots,F_d$ and copula $C$, let $\varphi\colon\mathbb{R}^d\to\mathbb{R}$ be non-decreasing in each coordinate, and define the functional
\[
\P_C(\varphi(\mathbf{X})<s) := \int_{\mathbb{R}^d} \mathds{1}_{\{\varphi(x_1,\dots,x_d)<s\}}\ \ud C(F_1(x_1),\dots,F_d(x_d)).
\]
Let $C_0,C_1$ be copulas and consider the following quantities
\begin{align*}
m_{C_0,\varphi}(s) 
	&:= \inf\big\{\P_C(\varphi(\mathbf{X})<s) \colon C\in\mathcal C^d, C_0\preceq C\big\}, \\
M_{\widehat C_1,\varphi}(s) 
	&:= \sup\big\{\P_C(\varphi(\mathbf{X})<s) \colon C\in\mathcal C^d, \widehat{C}_1\preceq \widehat{C}\big\}.
\end{align*}
The following bounds on $m_{C_0,\varphi}, M_{\widehat C_1,\varphi}$ are known in the literature as \textit{improved standard bounds} and read as follows:
\begin{align}\label{eq:ISB}
\begin{split}
m_{C_0,\varphi}(s) 
	& \geq \sup_{\mathcal V^<_\varphi(s)}\ C_0\big(F_1(x_1),\dots,F_{d}(x_{d})\big) 
	=: \underline{m}_{C_0,\varphi}(s),\\ 
M_{\widehat C_1,\varphi}(s) 
	& \leq \inf_{\mathcal V^>_\varphi(s)}\ 1-\widehat{C}_1\big(F_1(x_1),\dots,F_{d}(x_{d})\big) 
	=: \overline{M}_{\widehat C_1,\varphi}(s),
\end{split}
\end{align}
where $\mathcal V^<_\varphi(s) = \{ (x_1,\dots,x_d)\in\R^d: \varphi(\bx)< s \}$ and $\mathcal V^>_\varphi(s) = \{ (x_1,\dots,x_d)\in\R^d: \varphi(\bx)> s \}$; see \cite{embrechts2003,embrechts2006}.
A careful examination of the proof of Theorem 3.1 in \citet{embrechts2006} reveals that these results hold also when $C_0$, resp. $\widehat{C}_1$, is just increasing, resp. decreasing, in each coordinate. 
Hence, they hold in particular when $C_0$ is a quasi-copula and $\widehat{C}_1$ a quasi-survival function. 
The above bounds relate to the VaR of $\varphi(\mathbf{X})$ in the following way.

\begin{remark}\label{varBoundsRemark}
Let $\varphi\colon\mathbb{R}^d\to\mathbb{R}$ be increasing in each component and the copula $C$ of $\mathbf{X}$ be such that $Q_0\preceq C$ and $\widehat{Q}_1\preceq\widehat{C}$, for a quasi-copula $Q_0$ and a quasi-survival function $\widehat{Q}_1$. 
Then
$$
\overline{M}^{-1}_{\widehat Q_1,\varphi}(\alpha) 
	\leq \mathrm{VaR}_\alpha(\varphi(\mathbf{X})) 
	\leq \underline{m}^{-1}_{Q_0,\varphi}(\alpha).
$$
\end{remark}

Besides the aggregation function $\varphi(x_1,\dots,x_d) = x_1+\cdots+x_d$, the operations $\varphi(x_1,\dots,x_d) = \max\{x_1,\dots,x_d\}$ and $\varphi(x_1,\dots,x_d) = \min\{x_1,\dots,x_d\}$ are also of particular interest in risk management, however fewer methods to handle dependence uncertainty for these operations exist; \textit{cf.} \citet{embrechts2014}. 
The following result, whose proof is deferred to Appendix \ref{app:easy-proof}, establishes bounds for the minimum and maximum operations in the presence of additional information on the copula using straightforward computations, and further shows that these bounds coincide with the improved standard bounds \eqref{eq:ISB}. 
Analogous statements for $d=2$ in the absence of additional information on the copula $C$ can be found in \citet[Theorem~5.1]{frank1987}.

\begin{proposition}
\label{varBoundsMax}
Let $\mathbf X$ be a random vector with copula $C$ and marginals $F_1,\dots,F_d$, and let $\underline{Q},\overline{Q}$ be quasi-copulas. 
Then, for $\varphi(x_1,\dots,x_d)=\max\{x_1,\dots,x_d\}$, we have that 
\begin{align*}
m_{\underline{Q},\max}(s) 
	&= \inf\big\{\P_C(\varphi(\mathbf{X})<s)\colon \underline{Q}\preceq C\big\} 
	\geq \underline{Q}(F_1(s),\dots,F_d(s)) 
	=: \underline{m}_{\underline{Q},\max}(s)\\
M_{\overline{Q},\max}(s) 
	&= \sup\big\{\P_C(\varphi(\mathbf{X})<s)\colon C\preceq \overline{Q}\big\} 
	\leq \overline{Q}(F_1(s),\dots,F_d(s)).
\end{align*}
Analogously, if $\underline{\widehat{Q}}$ and $\widehat{\overline{Q}}$ are quasi-survival functions then, for $\varphi(x_1,\dots,x_d)=\min\{x_1,\dots,x_d\}$, we have that  
\begin{align*}
m_{\widehat{\overline{Q}},\min}(s) 
	&= \inf\big\{\P_C(\varphi(\mathbf{X})<s)\colon \widehat{C}\preceq \widehat{\overline{Q}}\big\} 
	\geq 1-\widehat{\overline{Q}}(F_1(s),\dots,F_d(s))\\
M_{\widehat{\underline{Q}},\min}(s) 
	&= \sup\big\{\P_C(\varphi(\mathbf{X})<s)\colon \widehat{\underline{Q}}\preceq \widehat{C}\big\} 
	\leq 1-\widehat{\underline{Q}}(F_1(s),\dots,F_d(s)) 
	=: \overline{M}_{\widehat{\underline{Q}},\min}(s).
\end{align*}
\end{proposition}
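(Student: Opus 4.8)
The plan is to exploit the fact that for $\varphi=\max$ and $\varphi=\min$ the sets $\mathcal V^<_\varphi(s)$ and $\mathcal V^>_\varphi(s)$ appearing in \eqref{eq:ISB} degenerate into orthants, so that $\P_C(\varphi(\mathbf X)<s)$ can be evaluated explicitly in terms of the copula $C$ — respectively its survival function $\widehat C$ — at a single point, and then to read off the bounds directly from the constraints $\underline Q\preceq C$, $C\preceq\overline Q$ (resp. the constraints on $\widehat C$). Concretely, for $\varphi(x_1,\dots,x_d)=\max\{x_1,\dots,x_d\}$ one has $\{\varphi(\mathbf X)<s\}=\{X_1<s,\dots,X_d<s\}$, so taking $\mathbf X$ with distribution function $C(F_1(\cdot),\dots,F_d(\cdot))$ and using continuity from below of $\P$ together with the Lipschitz continuity of $C$ I would obtain
$$\P_C(\varphi(\mathbf X)<s)=\lim_{t\uparrow s}C\bigl(F_1(t),\dots,F_d(t)\bigr)=C\bigl(F_1(s),\dots,F_d(s)\bigr),$$
where, in accordance with the convention behind \eqref{standardBounds}, the right-hand side is read with left limits $F_i(s^-)$ should some marginal have an atom at $s$. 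Then $\underline Q\preceq C$ and monotonicity of the lower orthant order give $\P_C(\varphi(\mathbf X)<s)\ge\underline Q(F_1(s),\dots,F_d(s))$ for every admissible $C$, hence the same lower bound for the infimum $m_{\underline Q,\max}(s)$; symmetrically $C\preceq\overline Q$ yields $M_{\overline Q,\max}(s)\le\overline Q(F_1(s),\dots,F_d(s))$. To see that the lower bound coincides with $\underline m_{\underline Q,\max}(s)$, I would observe that for $\varphi=\max$ one has $\mathcal V^<_\varphi(s)=(-\infty,s)^d$, whence $\sup_{\mathcal V^<_\varphi(s)}\underline Q(F_1(x_1),\dots,F_d(x_d))=\underline Q(F_1(s),\dots,F_d(s))$ by monotonicity and continuity of $\underline Q$.

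The minimum is entirely dual and I would treat it the same way. For $\varphi(x_1,\dots,x_d)=\min\{x_1,\dots,x_d\}$ one has $\{\varphi(\mathbf X)<s\}^c=\{X_1\ge s,\dots,X_d\ge s\}$, so \eqref{survivalCopulaProbability}, continuity from above of $\P$ and continuity of $\widehat C$ give $\P_C(\varphi(\mathbf X)<s)=1-\widehat C(F_1(s),\dots,F_d(s))$ (again with left limits at atoms). The constraint $\widehat C\preceq\widehat{\overline{Q}}$ then forces $\P_C(\varphi(\mathbf X)<s)\ge 1-\widehat{\overline{Q}}(F_1(s),\dots,F_d(s))$, yielding the bound on $m_{\widehat{\overline{Q}},\min}(s)$, while $\widehat{\underline{Q}}\preceq\widehat C$ forces $\P_C(\varphi(\mathbf X)<s)\le 1-\widehat{\underline{Q}}(F_1(s),\dots,F_d(s))$, yielding the bound on $M_{\widehat{\underline{Q}},\min}(s)$. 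For the identification with $\overline M_{\widehat{\underline{Q}},\min}(s)$ I would note that $\mathcal V^>_\varphi(s)=(s,\infty)^d$ when $\varphi=\min$ and use the \emph{right}-continuity of the $F_i$ together with the monotonicity and continuity of $\widehat{\underline{Q}}$ to get $\inf_{\mathcal V^>_\varphi(s)}\bigl(1-\widehat{\underline{Q}}(F_1(x_1),\dots,F_d(x_d))\bigr)=1-\widehat{\underline{Q}}(F_1(s),\dots,F_d(s))$.

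I do not expect any real obstacle here — this is the ``straightforward computation'' announced before the statement. The only point that needs a little care is the strict inequality in the definition of $\P_C$: at atoms of the marginals it forces one to work with one-sided limits of the $F_i$ (from the left in the $\max$ case, from the right inside the orthant in the $\min$ case), which is exactly why the suprema and infima in \eqref{eq:ISB} reduce to the one-sided values used above; under continuous marginals these collapse to $F_i(s)$ and the statement is literal. It is also worth checking — and it is immediate — that only monotonicity and Lipschitz continuity of the (quasi-)copulas and (quasi-)survival functions enter anywhere, so the stated generality (quasi-copulas $\underline Q,\overline Q$ and quasi-survival functions $\widehat{\overline{Q}},\widehat{\underline{Q}}$ in place of genuine copulas) costs nothing.
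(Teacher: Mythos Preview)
Your proposal is correct and follows essentially the same route as the paper's proof: both identify $\P_C(\max\mathbf X<s)=C(F_1(s),\dots,F_d(s))$ and $\P_C(\min\mathbf X<s)=1-\widehat C(F_1(s),\dots,F_d(s))$, read off the inequalities from the lower-orthant constraints, and then use $\mathcal V^<_{\max}(s)=(-\infty,s)^d$, $\mathcal V^>_{\min}(s)=(s,\infty)^d$ together with monotonicity to identify the result with the improved standard bounds. You are in fact slightly more careful than the paper about the strict inequality and possible atoms of the marginals; the paper simply invokes Sklar's Theorem and writes $C(F_1(s),\dots,F_d(s))$ without further comment.
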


\section{Improved bounds on the Value-at-Risk of the sum with known distributions of some minima or maxima}
\label{prescribedMax}

In this section, we provide improved bounds on the VaR of the sum $X_1+\cdots+X_d$ in the situation where, besides the marginal distributions, the laws of the minima and maxima of some subsets of the risks $X_1,\dots,X_d$ are known. 
In particular, we assume that for a system $J_1,\dots,J_m\subset\{1,\dots,d\}$ the distribution of $\max_{j\in J_n} X_j$ or $\min_{j\in J_n} X_j$ for $n=1,\dots,m$ is given. 
This setting can be viewed as an interpolation between the marginals-only case and the situation where the lower-dimensional marginals of the vectors $(X_j)_{j\in J_n}$ are completely specified. 
The latter setting has been studied extensively in the literature, and risk bounds for aggregations of $(X_1,\dots,X_d)$ given some of its lower-dimensional marginals were obtained, for instance, by \citet{rueschendorf1991}, \citet{embrechts2010} and \citet{puccetti2012b}. 
These bounds are based on a reduction principle that transforms the optimization problem involving higher-dimensional marginals into a standard Fr\'echet problem (\textit{i.e.} marginals-only), utilizing the extra information about the distribution of the subvector $(X_j)_{j\in J_n}$.
In practice however it is often difficult to determine the distributions of the lower-dimensional vectors $(X_j)_{j\in J_n}$. 
In particular when the subsets $J_1,\dots,J_m$ are high dimensional, a vast amount of data is required in order to estimate the distribution of $(X_j)_{j\in J_n}$ with an adequate degree of accuracy. 
Therefore, having complete information about lower-dimensional marginals of $(X_1,\dots,X_d)$ turns out to be a rather strong assumption, while methods that interpolate between this scenario and the marginals-only case are of practical interest. 
Based on the reduction principle of \citep{puccetti2012b}, we develop in this section improved bounds on the VaR of the sum $X_1+\cdots+X_d$ when, instead of the distribution of $(X_j)_{j\in J_n}$, only the distribution of its maximum $\max_{j\in J_n} X_j$ or minimum $\min_{j\in J_n} X_j$ is known. 

\begin{remark}
Another work in the spirit of interpolation between the marginals-only case and the case of full knowledge of the lower-dimensional marginals is \citet{rueschendorf2017}, where knowledge of dependence information on subgroups of $(X_1,\dots,X_d)$ is assumed. 
\end{remark}

Let us denote by $\mathcal{I}:=\{1,\dots,d\}$ and by $\mathcal{J}:=\{1,\dots,m\}$.

\begin{theorem}\label{boundMax}
Let $(X_1,\dots,X_d)$ be a random vector with marginals $F_1,\dots,F_d$, and consider a collection $\mathcal{E} = \{J_1,\dots,J_m\}$ of subsets $J_n\subset\mathcal{I}$ for $n\in\mathcal{J}$ with $\bigcup_{n\in\mathcal{J}} J_{n} = \mathcal{I}$. 
Denote by $G_n$ the distribution of $Y_n=\max_{j\in J_n} X_j$. 
Then it follows that
\begin{multline*}
\inf\big\{\P(X_1+\cdots+X_d\leq s)\colon X_i\sim F_i, i\in\mathcal{I}, \max_{j\in J_n} X_j \sim G_n, n\in\mathcal{J}\big\}\\
	\geq \sup_{(\alpha_1,\dots,\alpha_m)\in\underline{\mathcal{A}}} \inf\big\{\P(\alpha_1Y_1+\cdots+\alpha_mY_m\leq s)\colon Y_n\sim G_n, n \in\mathcal{J}\big\} 
	=: \underline{m}_{\mathcal{E},\max}(s),
\end{multline*}
where 
$$\underline{\mathcal{A}} = \Big\{(\alpha_1,\dots,\alpha_m)\in\mathbb{R}_+^m\colon \sum_{n=1}^m \alpha_n\max_{j\in J_n} x_j \geq \sum_{i=1}^d x_i,\text{ for all } (x_1,\dots,x_d)\in\mathbb{R}^d\Big\}\neq\emptyset.$$ 
Moreover if $(X_1,\dots,X_d)$ is $\mathbb{R}_+^d$-valued, then
\begin{multline*}
\sup\big\{\P(X_1+\cdots+X_d\leq s)\colon X_i\sim F_i, i\in\mathcal{I}, \max_{j\in J_n} X_j \sim G_n, n\in\mathcal{J}\big\}\\
	\leq  \inf_{(\alpha_1,\dots,\alpha_m)\in\overline{\mathcal{A}}}  \sup\big\{\P(\alpha_1Y_1+\cdots+\alpha_mY_m\leq s)\colon Y_n\sim G_n, 	n\in\mathcal{J}\big\} 
	=: \overline{M}_{\mathcal{E},\max}(s),
\end{multline*}
where 
$$\overline{\mathcal{A}} = \Big\{(\alpha_1,\dots,\alpha_m)\in\mathbb{R}_+^m\colon \sum_{n=1}^m \alpha_n\max_{j\in J_n} x_j \leq \sum_{i=1}^d x_i,\text{ for all } (x_1,\dots,x_d)\in\mathbb{R}_+^d\Big\}\neq\emptyset.$$
\end{theorem}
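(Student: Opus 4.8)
The plan is to deduce both inequalities from a single elementary observation: if $\alpha=(\alpha_1,\dots,\alpha_m)$ lies in $\underline{\mathcal{A}}$ then the random variable $\sum_{n=1}^m\alpha_n\max_{j\in J_n}X_j$ dominates $\sum_{i=1}^dX_i$ pointwise, while if $\alpha\in\overline{\mathcal{A}}$ and $(X_1,\dots,X_d)$ is $\R_+^d$-valued then it is dominated by $\sum_{i=1}^dX_i$ pointwise. Such a pointwise ordering of random variables immediately orders the events $\{\,\cdot\le s\,\}$, hence their probabilities; one then \emph{relaxes} the joint distribution of the partial maxima of $\mathbf X$ to an arbitrary element of the Fr\'echet class with fixed margins $G_1,\dots,G_m$, and finally optimizes over the free parameter $\alpha$. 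The only ingredient that needs a separate, purely combinatorial argument is that $\underline{\mathcal{A}}$ and $\overline{\mathcal{A}}$ are nonempty.

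For the lower bound I would fix $\alpha\in\underline{\mathcal{A}}$ and an arbitrary admissible random vector $\mathbf X=(X_1,\dots,X_d)$, that is, one with $X_i\sim F_i$ for $i\in\mathcal I$ and $\max_{j\in J_n}X_j\sim G_n$ for $n\in\mathcal J$; if no such $\mathbf X$ exists the left-hand infimum equals $+\infty$ and there is nothing to prove. Setting $Y_n:=\max_{j\in J_n}X_j$, the vector $(Y_1,\dots,Y_m)$ has margins $G_1,\dots,G_m$, and by the defining property of $\underline{\mathcal{A}}$ one has $\sum_{n=1}^m\alpha_nY_n\ge\sum_{i=1}^dX_i$ for every realization, so that $\{\sum_{n}\alpha_nY_n\le s\}\subseteq\{\sum_{i}X_i\le s\}$ and consequently
$$\P\Big(\sum_{i=1}^dX_i\le s\Big)\ \ge\ \P\Big(\sum_{n=1}^m\alpha_nY_n\le s\Big)\ \ge\ \inf\Big\{\P\Big(\sum_{n=1}^m\alpha_nY_n'\le s\Big)\colon Y_n'\sim G_n,\ n\in\mathcal J\Big\},$$
the last inequality because $(Y_1,\dots,Y_m)$ is one admissible coupling for the inner Fr\'echet problem. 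Taking the infimum over all admissible $\mathbf X$ on the left and then the supremum over $\alpha\in\underline{\mathcal{A}}$ gives $\underline{m}_{\mathcal E,\max}(s)$. The upper bound is obtained by the same chain with the inequalities reversed: for $\R_+^d$-valued $\mathbf X$ and $\alpha\in\overline{\mathcal{A}}$ one has $\sum_n\alpha_nY_n\le\sum_iX_i$, hence $\{\sum_iX_i\le s\}\subseteq\{\sum_n\alpha_nY_n\le s\}$, hence $\P(\sum_iX_i\le s)\le\P(\sum_n\alpha_nY_n\le s)\le\sup\{\P(\sum_n\alpha_nY_n'\le s)\colon Y_n'\sim G_n\}$; taking the supremum over admissible $\mathbf X$ and then the infimum over $\alpha\in\overline{\mathcal{A}}$ gives $\overline{M}_{\mathcal E,\max}(s)$.

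It remains to exhibit elements of the two parameter sets. For $\underline{\mathcal{A}}$, the assumption $\bigcup_{n\in\mathcal J}J_n=\mathcal I$ lets me choose a partition $\mathcal I=\bigsqcup_{n=1}^m I_n$ with $I_n\subseteq J_n$ for each $n$ (assign every index $i$ to some $I_n$ with $i\in J_n$; empty parts are allowed). With $\alpha_n:=|I_n|$ one gets, for every $x\in\R^d$,
$$\sum_{i=1}^dx_i=\sum_{n=1}^m\sum_{i\in I_n}x_i\ \le\ \sum_{n=1}^m|I_n|\max_{i\in I_n}x_i\ \le\ \sum_{n=1}^m|I_n|\max_{j\in J_n}x_j=\sum_{n=1}^m\alpha_n\max_{j\in J_n}x_j,$$
the first inequality because a sum of $k$ real numbers is at most $k$ times their maximum, the second because $I_n\subseteq J_n$ and $|I_n|\ge0$; hence $\alpha\in\underline{\mathcal{A}}$. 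For $\overline{\mathcal{A}}$, the vector $\alpha=(1/m,\dots,1/m)$ works, since for $x\in\R_+^d$ each $\max_{j\in J_n}x_j\le\sum_{i=1}^dx_i$ and therefore $\sum_{n=1}^m\tfrac1m\max_{j\in J_n}x_j\le\tfrac1m\cdot m\sum_{i=1}^dx_i=\sum_{i=1}^dx_i$; in particular both sets are nonempty.

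I expect the genuine content to be the relaxation step, where the joint law of the partial maxima carried by $\mathbf X$ is discarded in favor of an arbitrary coupling with margins $G_1,\dots,G_m$: this is precisely the reduction principle of \citet{puccetti2012b}, it is what makes the bounds tractable (the inner problems are standard Fr\'echet problems for $\alpha_1Y_1+\cdots+\alpha_mY_m$, solvable by the standard bounds or the rearrangement algorithm), and it is also the reason the bounds are generally not sharp. The one piece of bookkeeping I would flag is the nonemptiness of $\underline{\mathcal{A}}$: the naive guess $\alpha_n=|J_n|$ need not belong to $\underline{\mathcal{A}}$, because $\max_{j\in J_n}x_j$ can be negative for some $x\in\R^d$, which forces the passage to a sub-partition as above.
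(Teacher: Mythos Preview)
Your argument is correct and follows essentially the same route as the paper: pointwise domination of $\sum_i X_i$ by $\sum_n\alpha_n\max_{j\in J_n}X_j$ gives the event inclusion, hence the probability inequality, then one relaxes the joint law of the partial maxima to the full Fr\'echet class with margins $G_1,\dots,G_m$ and optimizes over $\alpha$. Your closing flag is in fact well placed and improves on the paper: the paper takes $\alpha_n=|J_n|$ and asserts $\sum_{n}\sum_{j\in J_n}x_j\ge\sum_i x_i$ for all $x\in\mathbb{R}^d$, which fails when the $J_n$ overlap and some coordinates are negative (e.g.\ $d=3$, $J_1=\{1,2\}$, $J_2=\{2,3\}$, $x=(-1,-1,-1)$ gives $-4\not\ge -3$, so $(|J_1|,|J_2|)=(2,2)\notin\underline{\mathcal{A}}$), whereas your sub-partition choice $\alpha_n=|I_n|$ with $I_n\subseteq J_n$ repairs this cleanly.
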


\begin{proof}
We first show that the lower bound $\underline{m}_{\mathcal{E},\max}$ is valid. 
It follows from $\bigcup_{n=1}^m J_n = \{1,\dots,d\}$ that $\underline{\mathcal{A}}\neq\emptyset$. 
Indeed, choosing for instance $\alpha_n=|J_n|$ we get that $\sum_{j\in J_n} x_j \leq \alpha_n \max_{j\in J_n} x_j$, for all $(x_1,\dots,x_d)\in\mathbb{R}^d$ and $n=1,\dots,m$.
Hence
$$\sum_{n=1}^m \alpha_n\max_{j\in J_n} x_j
	\geq \sum_{n=1}^m\sum_{j\in J_n} x_j 
	\geq \sum_{i=1}^d x_i\quad\text{for all }(x_1,\dots,x_d)\in\mathbb{R}^d.$$
Then, it follows for arbitrary $(\alpha_1,\dots,\alpha_m)\in\underline{\mathcal{A}}$ that
$$\bigg\{\sum_{n=1}^m \alpha_n\max_{j\in J_n} X_j\leq s\bigg\} 
	\subseteq \bigg\{\sum_{i=1}^d X_i\leq s\bigg\},$$
henceforth
\begin{align*}
\inf & \big\{\P(X_1+\cdots+X_d\leq s)\colon X_i\sim F_i, i\in\mathcal{I}, \max_{j\in J_n} X_j \sim G_n, n\in\mathcal{J}\big\} \\
	&\geq \inf\bigg\{\P\bigg(\sum_{n=1}^m \alpha_n\max_{j\in J_n} X_j\leq s\bigg)\colon X_i\sim F_i, i\in\mathcal{I}, \max_{j\in J_n} X_j \sim 		G_n, n\in\mathcal{J}\bigg\}\\
	&= \inf\big\{\P(\alpha_1Y_1+\cdots+\alpha_mY_m\leq s)\colon Y_n\sim G_n, n\in\mathcal{J}\big\}.
\end{align*} 
Now, since $(\alpha_1,\dots,\alpha_m)\in\underline{\mathcal{A}}$ was arbitrary, it follows that the lower bound holds by taking the supremum over all elements in $\underline{\mathcal{A}}$.

Likewise for the upper bound, we note that since $(X_1,\dots,X_d)$ is $\mathbb{R}^d_+$-valued, the vectors $(0,\dots,0)$ and $(1,\dots,1)$ belong to $\overline{\mathcal{A}}$, hence it is not empty. 
Moreover, for arbitrary $(\alpha_1,\dots,\alpha_m)\in\overline{\mathcal{A}}$, it follows that 
$$
\bigg\{\sum_{n=1}^m \alpha_n\max_{j\in J_n} X_j\leq s\bigg\} 
	\supseteq \bigg\{\sum_{i=1}^d X_i\leq s\bigg\},
$$
due to the fact that $(X_1,\dots,X_d)$ is non-negative and $\sum_{j\in J_n} x_j \geq \sum_{n=1}^m \alpha_n\max_{j\in J_n} x_j$. 
Hence, we get that 
\begin{align*}
\sup\big\{ & \P(X_1+\cdots+X_d\leq s)\colon X_i\sim F_i, i\in\mathcal{I}, \max_{j\in J_n} X_j \sim G_n, n\in\mathcal{J}\big\} \\
	&\leq\sup\bigg\{\P\bigg(\sum_{n=1}^m \alpha_n\max_{j\in J_n} X_j\leq s\bigg)\colon X_i\sim F_i, i\in\mathcal{I}, \max_{j\in J_n} X_j \sim G_n, n\in\mathcal{J}\bigg\}\\
	&=\sup\big\{\P(\alpha_1 Y_1+\cdots+\alpha_m Y_m\leq s)\colon Y_n\sim G_n, n\in\mathcal{J}\big\}. 
\end{align*}
Since $(\alpha_1,\dots,\alpha_m)\in\overline{\mathcal{A}}$ was arbitrary, it follows that the upper bound holds indeed. 
\end{proof}

\begin{remark}\label{rem:calE}
The assumption $\bigcup_{n=1}^m J_{n} = \{1,\dots,d\}$ can always be met by adding singletons to $\mathcal{E}$, \textit{i.e.} $J_n = \{i_n\}$ for $i_n\in\{1,\dots,d\}$, since the marginal distributions of $(X_1,\dots,X_d)$ are known. 
However, the bounds are valid even when the marginal distributions are not known.
\end{remark}

By the same token, the following result establishes bounds on the distribution of the sum of the components of $\mathbf{X}$ when distributions of some minima are known.
The proof follows along the same lines of argumentation as the proof of Theorem \ref{boundMax}, and is therefore omitted.

\begin{theorem}\label{boundMin}
Consider the setting of Theorem \ref{boundMax} and denote by $H_n$ the distribution of $Z_n=\min_{j\in J_n} X_j$. 
Then it follows that 
\begin{multline*}
\sup\big\{\P(X_1+\cdots+X_d\leq s)\colon X_i\sim F_i, i\in\mathcal{I}, \min_{j\in J_n} X_j \sim H_n, n\in\mathcal{J}\big\}\\
	\leq \inf_{(\alpha_1,\dots,\alpha_m)\in\overline{\mathcal{B}}} \sup\big\{\P(\alpha_1Z_1+\cdots+\alpha_mZ_m\leq s)\colon Z_n\sim H_n, n\in\mathcal{J}\big\} 
	=: \overline{M}_{\mathcal{E},\min}(s),
\end{multline*}
where 
$$\overline{\mathcal{B}} = \Big\{(\alpha_1,\dots,\alpha_m)\in\mathbb{R}_+^m\colon \sum_{n=1}^m \alpha_n\min_{j\in J_n} x_j \leq \sum_{i=1}^d x_i,\text{ for all } (x_1,\dots,x_d)\in\mathbb{R}^d\Big\}\neq\emptyset.$$
Moreover if $(X_1,\dots,X_d)$ is $\mathbb{R}_-^d$-valued, then 
\begin{multline*}
\inf\big\{\P(X_1+\cdots+X_d\leq s)\colon X_i\sim F_i, i\in\mathcal{I}, \min_{j\in J_n} X_j \sim H_n, n\in\mathcal{J}\big\}\\
	\geq \sup_{(\alpha_1,\dots,\alpha_m)\in\underline{\mathcal{B}}}\inf\big\{\P(\alpha_1 Z_1+\cdots+\alpha_m Z_m\leq s)\colon Z_n\sim H_n, n\in\mathcal{J}\big\} 
	=: \underline{m}_{\mathcal{E,\min}}(s),
\end{multline*}
where 
$$\underline{\mathcal{B}} = \Big\{(\alpha_1,\dots,\alpha_m)\in\mathbb{R}_+^m\colon \sum_{n=1}^m \alpha_n\min_{j\in J_n} x_j \geq \sum_{i=1}^d x_i,\text{ for all } (x_1,\dots,x_d)\in\mathbb{R}_-^d\Big\}\neq\emptyset.$$
\end{theorem}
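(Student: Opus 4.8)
The plan is to mimic the proof of Theorem~\ref{boundMax} verbatim, swapping the roles of maxima and minima and flipping the relevant inequalities and orthants. The key observation is that the argument in Theorem~\ref{boundMax} only used two structural facts: (i) the coefficient sets $\underline{\mathcal A}$, $\overline{\mathcal A}$ are non-empty, and (ii) for any admissible coefficient vector the event $\{\sum_n \alpha_n \max_{j\in J_n} X_j \le s\}$ is contained in (resp.\ contains) the event $\{\sum_i X_i \le s\}$, which turns an optimization over a $d$-dimensional Fr\'echet class into one over the $m$-dimensional Fr\'echet class of the $G_n$'s (resp.\ $H_n$'s). So I would carry out exactly these two steps for the minima.

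For the upper bound, I would first argue $\overline{\mathcal B}\neq\emptyset$: since $\bigcup_{n\in\mathcal J} J_n = \mathcal I$, choosing $\alpha_n = 1$ for each $n$ gives $\sum_{n=1}^m \min_{j\in J_n} x_j \le \sum_{i=1}^d x_i$ for all $(x_1,\dots,x_d)\in\mathbb R^d$, because every index $i$ lies in some $J_n$ and $\min_{j\in J_n} x_j \le x_i$ there, while the remaining terms can be absorbed (here one should be a bit careful: when the $J_n$'s overlap, $\sum_n \min_{j\in J_n} x_j$ may double-count, but since $\min_{j\in J_n} x_j \le x_{i}$ for a \emph{fixed} representative $i\in J_n$ chosen so that the $J_n$'s are covered by distinct representatives --- possible since the union is $\mathcal I$ --- one gets the bound; alternatively take $\alpha_n$ small enough). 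Then for arbitrary $(\alpha_1,\dots,\alpha_m)\in\overline{\mathcal B}$ the inclusion
$$\Big\{\sum_{n=1}^m \alpha_n \min_{j\in J_n} X_j \le s\Big\} \supseteq \Big\{\sum_{i=1}^d X_i \le s\Big\}$$
holds pointwise, hence $\P(X_1+\cdots+X_d\le s) \le \P(\sum_n \alpha_n \min_{j\in J_n} X_j \le s)$ for every feasible law of $\mathbf X$; taking the supremum over such laws on the left, then using that the law of $\min_{j\in J_n} X_j$ is pinned to $H_n$ so the right-hand side only depends on the joint law of $(Z_1,\dots,Z_m)$ ranging over the Fr\'echet class of $(H_1,\dots,H_m)$, and finally taking the infimum over $(\alpha_1,\dots,\alpha_m)\in\overline{\mathcal B}$, yields $\overline M_{\mathcal E,\min}(s)$.

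For the lower bound one restricts to $\mathbb R_-^d$-valued $\mathbf X$ so that the reverse inequality $\sum_{n=1}^m \alpha_n \min_{j\in J_n} x_j \ge \sum_{i=1}^d x_i$ can hold on the relevant domain (with, e.g., $\alpha_n = |J_n|$ one gets $\alpha_n \min_{j\in J_n} x_j \le \sum_{j\in J_n} x_j$ for $x\in\mathbb R_-^d$, wait --- this gives the wrong direction, so instead one notes that for $x \le 0$ one has $\min_{j\in J_n} x_j \le x_j \le 0$, and summing suitably scaled versions over a covering by distinct representatives gives $\sum_n \alpha_n \min_{j\in J_n} x_j \ge \sum_i x_i$; this non-emptiness check is the one genuinely fiddly point). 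Then $\{\sum_n \alpha_n \min_{j\in J_n} X_j \le s\} \subseteq \{\sum_i X_i \le s\}$ for $\mathbf X$ non-positive, so $\P(X_1+\cdots+X_d\le s) \ge \P(\sum_n \alpha_n \min_{j\in J_n} X_j \le s)$, and taking infima over feasible laws and then the supremum over $(\alpha_1,\dots,\alpha_m)\in\underline{\mathcal B}$ gives $\underline m_{\mathcal E,\min}(s)$. The main obstacle --- and really the only non-mechanical part --- is verifying non-emptiness of $\overline{\mathcal B}$ and $\underline{\mathcal B}$ with the correct inequality direction and the correct sign restriction on the domain; once the sign bookkeeping is pinned down, the rest is the same containment-of-events argument as in Theorem~\ref{boundMax}, which is why the authors omit it.
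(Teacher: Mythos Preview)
Your overall strategy is correct and is precisely what the paper intends: the paper omits the proof of Theorem~\ref{boundMin} entirely, saying only that it follows the same lines as Theorem~\ref{boundMax}, and your containment-of-events argument is exactly that mirror image.

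Where your write-up needs repair is the non-emptiness of $\overline{\mathcal B}$ and $\underline{\mathcal B}$. You correctly flag this as the only non-mechanical point, but your concrete suggestions do not work. For $\underline{\mathcal B}$ the matter is much simpler than you make it: it is the mirror of $\overline{\mathcal A}$ in Theorem~\ref{boundMax}, and since $\mathbf x\in\mathbb R_-^d$ gives $\sum_{i} x_i\le 0$, the zero vector $(0,\dots,0)$ already lies in $\underline{\mathcal B}$. For $\overline{\mathcal B}$, neither $\alpha_n=1$ nor ``$\alpha_n$ small enough'' succeeds: as the $\alpha_n$ tend to $0$ the left-hand side tends to $0$, but $\sum_i x_i$ can be negative, so the inequality fails on all of $\mathbb R^d$. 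A clean choice---the proper analogue of the paper's $\alpha_n=|J_n|$ for $\underline{\mathcal A}$---is obtained as follows: for each $i\in\mathcal I$ pick one index $n(i)$ with $i\in J_{n(i)}$ (possible since $\bigcup_n J_n=\mathcal I$) and set $\alpha_n=|\{i:n(i)=n\}|$. Then $\alpha_n\min_{j\in J_n}x_j\le\sum_{i:\,n(i)=n}x_i$ for every $\mathbf x\in\mathbb R^d$, and summing over $n$ yields $\sum_n\alpha_n\min_{j\in J_n}x_j\le\sum_i x_i$. Your ``distinct representatives'' remark was groping toward this assignment idea but had the direction reversed (one assigns indices $i$ to sets $J_n$, not the other way around, so in particular $m>d$ is no obstacle).
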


The computation of the bounds presented in Theorems \ref{boundMax} and \ref{boundMin} can be cumbersome for two reasons. 
Firstly, for fixed  $(\alpha_1,\dots,\alpha_m)$ there does not exist a method to compute sharp analytical bounds on the set $\big\{\P(\alpha_1Y_1+\cdots+\alpha_mY_m\leq s)\colon Y_n\sim G_n, n = 1,\dots,m\}$, except when $m=2$. 
This problem can be circumvented either by using the standard bounds in \eqref{standardBounds}, or numerically, by an application of the rearrangement algorithm of \citet{embrechts2013}; see Appendix \ref{sec:reduction-arg} for more details. 
Using the rearrangement algorithm, we are able to approximate upper and lower bounds on the set in an efficient way. In Section \ref{numerics} we demonstrate, that the bounds in Theorems \ref{boundMax} and \ref{boundMin} yield a significant improvement over sharp bounds available in the literature, that assume only knowledge of the marginals.

\section{Improved \FH bounds on copulas using a subset or a reference copula}
\label{boundsOnCopula}

A general method to derive sharper bounds on the Value-at-Risk of $\varphi(\mathbf X)$, for general aggregation functions $\varphi$, is to first derive improved \FH bounds on the copula of $\mathbf X$ by assuming that additional dependence information is available, and then to translate them into VaR bounds using the improved standard bounds \eqref{eq:ISB}; see also Remark \ref{varBoundsRemark}.
In this section, we focus on the first part of this strategy and discuss improved \FH bounds using two types of additional dependence information.

Firstly, we consider the situation where the copula $C$ of the risk vector $\mathbf X$ coincides with a reference model on a compact subset $\set$ of its domain, \textit{i.e.} it holds that $C(\bx) = C^*(\bx)$ for all $\bx\in\set$ and a reference copula $C^*$. 
In practice, the set $\set$ may correspond to a region in $\mathbb{I}^d$ that contains enough observations to estimate the copula $C$ with sufficient accuracy, so that we can assume that $C$ is known on $\set$. 
\citet{bernard2015} call such a subset \textit{trusted region} and present several techniques and criteria to select such regions when estimating copulas. 
If $\set$ is not equal to the entire domain of the copula, then dependence uncertainty stems from the fact that $C$ remains unknown on $\mathbb{I}^d\setminus\set$. 
In order to obtain VaR bounds in this situation, we use results from \citet{lux2016} who established improved \FH bounds on the set of copulas with prescribed values on a compact set. 

Secondly, we present a new improvement of the \FH bounds when the copula $C$ is assumed to lie in the vicinity of a reference model as measured by a statistical distance. 
More formally, we establish bounds on the set of all (quasi-)copulas $C$ in the $\delta$-neighborhood of the reference copula $C^*$, \textit{i.e.} such that $\cD(C,C^*)\leq\delta$ for a distance $\cD$. 
Our method applies to a large class of statistical distances such as the Cram\'er--von Mises or the $L^p$ distances. 
Such situations arise naturally in practice when one tries to estimate a copula from, or calibrate it to, empirical data. 
The estimation typically involves the minimization of a distance to the empirical copula over a parametric family of copulas, \textit{i.e.} $\cD(C_\theta,C^*)\to \min_\theta$ where $C^*$ is an empirical copula and $(C_\theta)_\theta$  is a family of parametric copulas. 
This is in the literature often referred to as \textit{minimal distance} or \textit{minimal contrast} estimation. 
\citet{kole2007} for instance present several distance-based techniques for selecting copulas in risk management. 
These estimation procedures lend themselves immediately to the methodology we propose, as typically one arrives at $\delta:=\min_\theta\cD(C_\theta,C^*)>0$, due to the fact that the family of models $(C_\theta)_\theta$ is not able to match the empirical observations exactly, thus dependence uncertainty remains. 
In this case, $\delta$ can be viewed as the inevitable model risk due to the choice of the parametric family $(C_\theta)_\theta$. 
Our method can then be used to account for such types of dependence uncertainty in the computation of VaR. 

Approaches to compute robust risk estimates over a class of models that lie in the proximity of a reference model have been proposed earlier in the literature. 
\citet{glasserman2013} derive robust bounds on the portfolio variance, the conditional VaR and the CVA over the class of models within a relative entropy distance of a reference model. 
\citet{barrieu2015} establish bounds on the VaR of a univariate random variable given that its distribution is close to a reference distribution in the sense of the Kolmogorov--Smirnov or L\'evy distance. 
In a multivariate setting, \citet{blanchet2016} use an optimal transport approach to derive robust bounds on risk estimates, such as ruin probabilities, over models that are in a neighborhood of a reference model in terms of the Wasserstein distance. 
This brief overview is, of course, incomplete and we refer the reader to the references in each of the aforementioned articles for a more detailed review of the associated literature.

\subsection{Improved \FH bounds using subsets}

Let us consider the setting where, apart from the marginal distributions, partial information on the dependence structure of the random vector $\mathbf X$ is available.
In particular, assume that the copula is known on some subset $\mathcal S$ of $[0,1]^d$. 
Theorem 3.1 in \citep{lux2016} establishes sharp bounds on the set 
\begin{align*}
\mathcal{Q}^{\set,Q^*} 
	:= \big\{Q\in\mathcal{Q}^d\colon Q(\mathbf{x}) = Q^*(\mathbf{x}) \text{ for all } \mathbf{x}\in \set\big\},
\end{align*}
where $\set\subset\mathbb{I}^d$ is compact and $Q^*$ is a $d$-quasi-copula. 
The bounds are provided by
\begin{align}
\label{bounds}
\begin{split}
\underline{Q}^{\set,Q^*}(\mathbf{u}) 
	:=& \min\big\{Q(\bu)\colon Q(\bx) = Q^*(\bx) \text{ for all } \bx\in\set\big\}\\
  	 =& \max\Big(0, \sum_{i=1}^d u_i-d+1,\max_{\mathbf{x}\in \set} 
    	\Big\{Q^*(\mathbf{x})-\sum_{i=1}^d (x_i-u_i)^+\Big\}\Big),\\
\overline{Q}^{S,Q^*}(\mathbf{u}) 
	:=& \max\big\{Q(\bu)\colon Q(\bx) = Q^*(\bu) \text{ for all } \bx\in\set\big\}\\
  	 =& \min\Big(u_1,\dots,u_d,\min_{\mathbf{x}\in S} \Big\{Q^*(\mathbf{x})+\sum_{i=1}^d (u_i-x_i)^+\Big\}\Big),
\end{split}
\end{align}
for all $\bu\in\mathbb{I}^d$, they are quasi-copulas, and also belong to $\mathcal{Q}^{S,Q^*}$.
Let us point out that a similar version of these bounds was presented recently by \citet{puccetti2016}. 
They were derived independently in the master thesis of the third-named author. 

\begin{remark}
By slightly abusing notation, we will sometimes write $\underline{Q}^{\{\bu\},\alpha}$ and $\overline{Q}^{\{\bu\},\alpha}$ with $\alpha\in [W_d(\bu),M_d(\bu)]$ instead of a quasi-copula function $Q^*$, and mean that $Q^*(\bu)=\alpha$.
\end{remark}

The bounds in \eqref{bounds} hold also for sets of copulas, \textit{i.e.} for each copula $C$ in 
$$\mathcal{C}^{\set,Q^*} 
	:= \big\{C\in\mathcal{C}^d\colon C(\mathbf{x}) = Q^*(\mathbf{x}) \text{ for all } \mathbf{x}\in \set\big\}$$
it holds that $\underline{Q}^{\set,Q^*}\lo C\lo\overline{Q}^{\set,Q^*}$, assuming that $\mathcal{C}^{\set,Q^*}$ is not empty. 
Moreover, Proposition A.1 in \citep{lux2016} provides analogous bounds on survival functions, \textit{i.e.} for a reference copula $C^*$ and any copula $C$ in
$$\widehat{\mathcal{C}}^{\set,C^*} 
	:= \big\{C\in\mathcal{C}^d\colon \widehat{C}(\mathbf{x}) = \widehat{C}^*(\mathbf{x}) \text{ for all } \mathbf{x}\in\set\big\}$$
it holds that $\widehat{\underline{Q}}^{\set,C^*} \lo \widehat{C}\lo \widehat{\overline{Q}}^{\set,C^*}$, where
\begin{align}
\label{survivalBounds}
\widehat{\underline{Q}}^{\set,C^*}(\mathbf{u}) 
	:= \underline{Q}^{\widehat{\set},\widehat{C}^*}(\mathbf 1-\bu) 
		\quad\text{and}\quad 
\widehat{\overline{Q}}^{\set,C^*}(\mathbf{u}) 
	:= \overline{Q}^{\widehat{\set},\widehat{C}^*}(\mathbf 1-\bu),
\end{align}
while $\widehat{\set} = \{(1-x_1,\dots,1-x_d)\colon (x_1,\dots,x_d)\in \set\}$.

In case $d=2$, the above bounds correspond to the improved \FH bounds derived by \citet{tankov2011}. 
He showed that the bounds are themselves copulas under certain constraints on the set $\set$, and those were readily relaxed by \citet{bernard2012}. 
In contrast, \citet{lux2016} showed that for $d>2$ the bounds $\underline{Q}^{\set,Q^*}$ and $\overline{Q}^{\set,Q^*}$ are copulas only in degenerate cases, and quasi-copulas otherwise.  
Moreover, \citet{Bartl_Kupper_Lux_Papapantoleon_2017} recently showed that once the constraints of \cite{tankov2011,bernard2012} are violated then the improved \FH bounds fail to even be pointwise sharp, still in dimension $d=2$.

\subsection{Improved \FH bounds using a reference model}

In the following we will establish improved \FH bounds using a different type of additional dependence information. 
Namely, we consider the set of copulas that are close to a reference copula in the sense of a statistical distance as defined below.
Let us first define the minimal and maximal convolution between two quasi-copulas $Q,Q'$ as the pointwise minimum and maximum between them, \textit{i.e.} $(Q\wedge Q')(\bu) = Q(\bu) \wedge Q'(\bu)$ and $(Q\vee Q')(\bu) = Q(\bu) \vee Q'(\bu)$.

\begin{definition}
A function $\cD\colon\mathcal{Q}^d\times\mathcal{Q}^d\to\mathbb{R}_+$ is called a \emph{statistical distance} if for $Q,Q'\in\mathcal{Q}^d$
$$\cD(Q,Q') = 0\quad\Longleftrightarrow\quad Q(\bu)=Q'(\bu)\quad \text{for all }\bu\in\mathbb{I}^d.$$
\end{definition}

\begin{definition}
\label{statisticalDistance}
A statistical distance $\cD$ is \emph{monotonic} with respect to the order $\preceq$ on $\mathcal{Q}^d$, if for $Q,Q',Q''\in\mathcal{Q}^d$ it holds
\begin{align*}
Q\preceq Q'\preceq Q'' 
	\quad\Longrightarrow\quad \cD(Q',Q'')\leq\cD(Q,Q'') \ \text{ and } \ \cD(Q'',Q')\leq\cD(Q'',Q).
\end{align*}
A statistical distance $\cD$ is \emph{min-} resp. \emph{max-stable} if for $Q,Q'\in\mathcal{Q}^d$ it holds 
\begin{align*}
\cD(Q,Q') & \geq \max\{\cD({Q\wedge Q'},Q), \cD(Q,{Q\wedge Q'})\} \\
\cD(Q,Q') & \geq \max\{\cD({Q\vee Q'},Q), \cD(Q,{Q\vee Q'})\}.
\end{align*}
\end{definition}

The following theorem establishes pointwise bounds on the set of quasi-copulas that are in the $\delta$-vicinity of a reference copula $C^*$ as measured by a statistical distance $\cD$.
This result continues the line of research initiated by \citet{nelsen2006} and continued by \citet{tankov2011} and \citet{lux2016} on improved \FH bounds in case some dependence functional is known.

\begin{theorem}
\label{prescribedDistance}
Let $C^*$ be a $d$-copula and $\cD$ be a statistical distance which is continuous with respect to the pointwise convergence of quasi-copulas, monotonic with respect to the lower orthant order and min/max-stable. 
Consider the set 
$$\mathcal{Q}^{\cD,\delta} := \big\{ Q\in\mathcal{Q}^d\colon \cD(Q,C^*) \leq \delta \big\}$$
for $\delta\in\mathbb{R}_+$. 
Then
\begin{align*}
\underline{Q}^{\cD,\delta}(\bu) 
	&:=\min\Big\{\alpha \in \mathbb S(\bu) \colon \cD\Big({\upQ^{\{\bu\},\alpha}\wedge\refC},\refC\Big) \leq \delta\Big\}
	 	= \min\big\{Q(\bu)\colon Q\in\mathcal{Q}^{\cD,\delta}\big\},\\
\overline{Q}^{\cD,\delta}(\bu) 
	&:=\max\Big\{\alpha \in \mathbb S(\bu) \colon \cD\Big({\downQ^{\{\bu\},\alpha}\vee\refC},\refC\Big) \leq \delta\Big\}
		= \max\big\{Q(\bu)\colon Q\in\mathcal{Q}^{\cD,\delta}\big\},
\end{align*}
where $\mathbb S(\bu) := [W_d(\bu),M_d(\bu)]$, and both bounds are quasi-copulas. 
\end{theorem}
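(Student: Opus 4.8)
The plan is to fix $\bu\in\I^d$ and prove both displayed equalities at $\bu$; the upper-bound identity is symmetric to the lower one (swap $\wedge\leftrightarrow\vee$, $\upQ^{\{\bu\},\alpha}\leftrightarrow\downQ^{\{\bu\},\alpha}$, $\min\leftrightarrow\max$, and min-stability $\leftrightarrow$ max-stability), so I describe only the lower bound. First I would record two preliminaries. (i) $\mathcal Q^d$ is compact for pointwise convergence (its elements are uniformly bounded and equi-Lipschitz), so $\mathcal Q^{\cD,\delta}$ is closed by continuity of $\cD$, hence compact, and nonempty since $\cD(\refC,\refC)=0$; therefore $\beta:=\min\{Q(\bu)\colon Q\in\mathcal Q^{\cD,\delta}\}$ is attained. (ii) A pointwise minimum of quasi-copulas, and more generally $\inf_{Q\in\mathcal Q^{\cD,\delta}}Q$, is again a quasi-copula (boundary conditions and monotonicity are clear, and the Lipschitz bound passes to infima); thus, once the pointwise identity $\underline Q^{\cD,\delta}(\bu)=\min\{Q(\bu)\colon Q\in\mathcal Q^{\cD,\delta}\}$ is established, the claim that $\underline Q^{\cD,\delta}$ is a quasi-copula follows, and symmetrically for $\overline Q^{\cD,\delta}$ via suprema.

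Next I would analyse the curve $R_\alpha:=\upQ^{\{\bu\},\alpha}\wedge\refC$ for $\alpha\in\mathbb S(\bu)$. From the explicit expression for $\upQ^{\{\bu\},\alpha}$ in \eqref{bounds}, the map $\alpha\mapsto\upQ^{\{\bu\},\alpha}$ is nondecreasing and continuous in $\alpha$ (pointwise), hence so is $\alpha\mapsto R_\alpha$, and $R_\alpha$ is a quasi-copula with $R_\alpha(\bu)=\alpha\wedge\refC(\bu)$. Since $\upQ^{\{\bu\},\alpha}$ is the largest quasi-copula taking value $\alpha$ at $\bu$ and is nondecreasing in $\alpha$, for $\alpha\ge\refC(\bu)$ we have $\refC\preceq\upQ^{\{\bu\},\refC(\bu)}\preceq\upQ^{\{\bu\},\alpha}$, so $R_\alpha=\refC$ and $\cD(R_\alpha,\refC)=0$; and for $W_d(\bu)\le\alpha\le\alpha'\le\refC(\bu)$ we have $R_\alpha\preceq R_{\alpha'}\preceq\refC$, so $\cD(R_{\alpha'},\refC)\le\cD(R_\alpha,\refC)$ by monotonicity. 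Together with continuity of $\cD$ along the pointwise-convergent family $(R_\alpha)$, this shows $\alpha\mapsto\cD(R_\alpha,\refC)$ is continuous, nonincreasing on $[W_d(\bu),\refC(\bu)]$ and $\equiv0$ on $[\refC(\bu),M_d(\bu)]$. Hence $\{\alpha\in\mathbb S(\bu)\colon\cD(R_\alpha,\refC)\le\delta\}$ is a nonempty closed interval $[\alpha^*,M_d(\bu)]$ with $\alpha^*\le\refC(\bu)$, so $\underline Q^{\cD,\delta}(\bu)$ is well defined and equals $\alpha^*$, and $R_{\alpha^*}\in\mathcal Q^{\cD,\delta}$ with $R_{\alpha^*}(\bu)=\alpha^*$.

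Then I would prove the two inequalities. ``$\ge$'': $R_{\alpha^*}$ is a quasi-copula lying in $\mathcal Q^{\cD,\delta}$ with $R_{\alpha^*}(\bu)=\alpha^*$, whence $\beta\le\alpha^*=\underline Q^{\cD,\delta}(\bu)$. ``$\le$'': take any $Q\in\mathcal Q^{\cD,\delta}$ and set $\gamma:=Q(\bu)\in\mathbb S(\bu)$; since $Q$ takes value $\gamma$ at $\bu$, \eqref{bounds} gives $Q\preceq\upQ^{\{\bu\},\gamma}$, so $Q\wedge\refC\preceq\upQ^{\{\bu\},\gamma}\wedge\refC=R_\gamma\preceq\refC$, and monotonicity yields $\cD(R_\gamma,\refC)\le\cD(Q\wedge\refC,\refC)$; combining monotonicity with min-stability of $\cD$ gives $\cD(Q\wedge\refC,\refC)\le\cD(Q,\refC)\le\delta$, so $\gamma$ lies in the interval above and $\underline Q^{\cD,\delta}(\bu)=\alpha^*\le\gamma=Q(\bu)$. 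Minimising over $Q$ gives $\underline Q^{\cD,\delta}(\bu)\le\beta$, hence equality. The upper bound is treated the same way with $\downQ^{\{\bu\},\alpha}\vee\refC$ in place of $R_\alpha$ (nondecreasing in $\alpha$, equal to $\refC$ for $\alpha\le\refC(\bu)$) and max-stability.

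I expect the decisive point to be the inequality $\cD(Q\wedge\refC,\refC)\le\cD(Q,\refC)$ used in the ``$\le$'' step: it is the one place where the interaction of the operation $\wedge$ with the reference copula $\refC$ must be controlled, and it requires the precise forms of min-stability and monotonicity of $\cD$ (together with $Q\wedge\refC\preceq\refC$) to be invoked in the correct order. Everything else is bookkeeping: the compactness argument for the attainment of $\beta$, the monotone and continuous dependence of $R_\alpha$ and of $\cD(R_\alpha,\refC)$ on $\alpha$, and the closure of (quasi-)copulas under pointwise minima/maxima and infima/suprema.
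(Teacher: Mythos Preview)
Your proposal is correct and follows essentially the same route as the paper's proof: analyse the curve $R_\alpha=\upQ^{\{\bu\},\alpha}\wedge\refC$, establish that $\alpha\mapsto\cD(R_\alpha,\refC)$ is continuous and nonincreasing, use $R_{\alpha^*}$ to witness one inequality, and invoke the maximality of $\upQ^{\{\bu\},\gamma}$ together with min-stability and monotonicity of $\cD$ for the other. Your presentation is slightly more streamlined than the paper's---you identify the feasible set as a closed interval $[\alpha^*,M_d(\bu)]$ directly, avoiding the paper's case split ($\delta$ above or below $\cD(R_{W_d(\bu)},\refC)$) and its proof by contradiction for the second inequality---but the substance is identical, including the justification that the bounds are quasi-copulas via closure of $\mathcal Q^d$ under pointwise infima/suprema.
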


\begin{proof}
We show that the statement holds for the lower bound, while the proof for the upper bound follows along the same lines. 
Fix an $\alpha\in[W_d(\bu),M_d(\bu)]$ and a $\bu\in\mathbb I^d$, then the map $v\mapsto\big({\upQ^{\{\bu\},\alpha}\wedge\refC}\big)(v)$ is a quasi-copula; this follows by straightforward calculations using the definition of the minimal convolution, see also \citet[Theorem 2.1]{Rodriguez_Ubeda_2004}. 
By definition, $\cD$ is monotonic with respect to the lower orthant order, thus it follows for $\underline{\alpha},\overline{\alpha} \in [W_d(\bu),M_d(\bu)]$ with $\underline{\alpha}<\overline{\alpha}$ that
$$\cD\Big({\upQ^{\{\bu\},\overline{\alpha}}\wedge\refC},\refC\Big) 
	\leq \cD\Big({\upQ^{\{\bu\},\underline{\alpha}}\wedge\refC},\refC\Big),$$ 
due to the fact that $\upQ^{\{\bu\},\underline{\alpha}} \lo \upQ^{\{\bu\},\overline{\alpha}}$, which readily implies
$$\Big({\upQ^{\{\bu\},\underline{\alpha}}\wedge\refC}\Big)\lo \Big({\upQ^{\{\bu\},\overline{\alpha}}\wedge\refC}\Big) \lo \refC.$$
Hence, the map 
$$[W_d(\bu),M_d(\bu)]\ni\alpha\mapsto\cD\Big({\upQ^{\{\bu\},\alpha}\wedge\refC},\refC\Big)$$
is decreasing. 
Moreover, as a consequence of the Arzel\`{a}--Ascoli Theorem, it follows that for every sequence $(\alpha_n)_n\subset[W_d(\bu),M_d(\bu)]$ with $\alpha_n\to\alpha$,
$$\Big({\upQ^{\{\bu\},\alpha_n}\wedge\refC}\Big) \xrightarrow[n\to\infty]{} \Big({\upQ^{\{\bu\},\alpha}\wedge\refC}\Big)$$ 
uniformly and, since $\cD$ is continuous with respect to the pointwise convergence of quasi-copulas, it follows that $\alpha\mapsto\cD\Big({\upQ^{\{\bu\},\alpha}\wedge\refC},\refC\Big)$ is continuous. 
In addition, we have that 
\begin{align}
\label{prescribedDistanceEq1}
\cD\Big({\upQ^{\{\bu\},M_d}\wedge\refC},\refC\Big)
	=\cD\Big({M_d\wedge\refC},\refC\Big) 
	= \cD\Big(\refC,\refC\Big)
	=0,
\end{align}
due to the fact that $C^*\lo M_d$. 
We now distinguish between two cases: 

\quad ($i$) Let $\delta\leq\cD\Big({\upQ^{\{\bu\},W_d}\wedge\refC},\refC\Big)$. 
Then, due to the monotonicity and continuity of the map $[W_d(\bu),M_d(\bu)]\ni\alpha\mapsto\cD\Big({\upQ^{\{\bu\},\alpha}\wedge\refC},\refC\Big)$ and \eqref{prescribedDistanceEq1} it holds that the set 
$$\mathcal{O}
	:=\Big\{\alpha\colon \cD\Big(\upQ^{\{\bu\},\alpha}\wedge\refC,\refC\Big) = \delta\Big\}$$
is non-empty and compact. 
Define $\alpha^* := \min\{\alpha\colon\alpha\in\mathcal{O}\}$. 
We will show that $\min\big\{Q(\bu)\colon Q\in\mathcal{Q}^{\cD,\delta}\big\}=\alpha^*$.
On the one hand, it holds that $\min\big\{Q(\bu)\colon Q\in\mathcal{Q}^{\cD,\delta}\big\}\leq\alpha^*$. 
Indeed, consider ${\upQ^{\{\bu\},\alpha^*}\wedge\refC}$ which is a quasi-copula and belongs to $\mathcal{Q}^{\cD,\delta}$ since $\alpha^*\in\mathcal{O}$. Then, we have that 
$$\Big({\upQ^{\{\bu\},\alpha^*}\wedge\refC}\Big)(\bu) = \min\{\alpha^*,C^*(u)\}  = \alpha^*,$$
using again that $\alpha^*\in\mathcal{O}$ and \eqref{prescribedDistanceEq1}.
Hence the inequality holds.
On the other hand, we will show now that the inequality cannot be strict by contradiction.
Assume there exists a quasi-copula $Q'\in\mathcal{Q}^{\cD,\delta}$ with $Q'(\bu)<\alpha^*$. 
Then it follows that 
\begin{align}
\begin{split}
\label{prescribedDistanceEq2}
\cD(Q',C^*) 
	& \geq \cD\big({Q'\wedge\refC},C^*\big) 
	  \geq \cD\Big({\upQ^{\{\bu\},Q'}\wedge\refC},C^*\Big)\\
	& \geq  \cD\Big({\upQ^{\{\bu\},\alpha^*}\wedge\refC},C^*\Big)
	= \delta,
\end{split}
\end{align}
where the first inequality follows from the min-stability of $\cD$, and the second and third ones from its monotonicity properties. 
However, since $Q'(\bu)\notin\mathcal{O}$ it follows that $ \cD\Big({\upQ^{\{\bu\},Q'}\wedge\refC},C^*\Big)\neq\delta$, hence \eqref{prescribedDistanceEq2} yields that $\cD\Big({\upQ^{\{\bu\},Q'}\wedge\refC},C^*\Big)>\delta$. 
This contradicts the assumption that $Q'\in\mathcal{Q}^{\cD,\delta}$, showing that indeed $\min\big\{Q(\bu)\colon Q\in\mathcal{Q}^{\cD,\delta}\big\}=\alpha^*$. Hence, the lower bound holds for $\delta\leq\cD\Big({\upQ^{\{\bu\},W_d}\wedge\refC},\refC\Big)$.

\quad ($ii$) Now, let $\delta>\cD\Big({\upQ^{\{\bu\},W_d}\wedge\refC},\refC\Big)$, then it follows that
$$\min\Big\{\alpha\in[W_d(\bu),M_d(\bu)]\colon\cD\Big({\upQ^{\{\bu\},\alpha}\wedge\refC},\refC\Big) \leq \delta\Big\} = W_d(\bu).$$
Moreover, since $\Big(\upQ^{\{\bu\},W_d}\wedge\refC\Big)\in\mathcal{Q}^{\cD,\delta}$ and every element in $\mathcal{Q}^{\cD,\delta}$ is bounded from below by $W_d$, it follows that $\min\big\{Q(\bu)\colon Q\in\mathcal{Q}^{\cD,\delta}\big\} = W_d(\bu)$. 
Hence, the lower bound holds in this case as well.

Finally, it follows again from \citep[Theorem 2.1]{Rodriguez_Ubeda_2004} that the bounds are quasi-copulas, which completes the proof.
\end{proof}

\begin{remark}
Let $C^*$ and $\cD$ be as in Theorem \ref{prescribedDistance}, and consider $\delta\in\mathbb{R}_+$. 
Then, the bounds $\underline{Q}^{\cD,\delta}$ and $\overline{Q}^{\cD,\delta}$ also apply to the set of copulas $\mathcal{C}^{\cD,\delta} := \{C\in\mathcal{C}^d\colon \cD(C,C^*) \leq \delta\}$, assuming that $\mathcal{C}^{\cD,\delta}\neq\emptyset$, that is
\begin{align}
\label{boundsDeltaCopulas}
\underline{Q}^{\cD,\delta} \lo C \lo \overline{Q}^{\cD,\delta},
\end{align}
for all $C\in\mathcal{C}^{\cD,\delta}$, due to the fact that $\mathcal{C}^{\cD,\delta}\subseteq\mathcal{Q}^{\cD,\delta}$.
\end{remark}

\begin{remark}
If $\cD$ is not symmetric, the set $\{Q\in\mathcal{Q}^d\colon \cD(Q,C^*) \leq \delta\}$ might not coincide with the set $\{Q\in\mathcal{Q}^d\colon \cD(C^*,Q) \leq \delta\}$. 
In this case the bounds on $\{Q\in\mathcal{Q}^d\colon \cD(C^*,Q) \leq \delta\}$ are provided by 
\begin{align*}
&\underline{Q}^{\cD,\delta}(\bu)=\min\Big\{\alpha\in[W_d(\bu),M_d(\bu)]\colon \cD\Big(\refC,\upQ^{\{\bu\},\alpha}\wedge\refC\Big) \leq \delta\Big\},\\
&\overline{Q}^{\cD,\delta}(\bu)=\max\Big\{\alpha\in[W_d(\bu),M_d(\bu)]\colon \cD\Big(\refC,\downQ^{\{\bu\},\alpha}\vee\refC\Big)\leq \delta\Big\}.
\end{align*}
\end{remark}

Many well-known statistical distances satisfy the requirements of Theorem \ref{prescribedDistance}. 
Typical examples are the Kolmogorov--Smirnov and the Cram\'er--von Mises distances, where
\begin{align*}
\cD_{\text{KS}}(Q,Q') := \sup_{\bu\in\mathbb{I}^d}|Q(\bu)-Q'(\bu)|	
	\quad \text{ and } \quad 
\cD_{\text{CM}}(Q,Q') := \int\nolimits_{\mathbb{I}^d} |Q(\bu)-Q'(\bu)|^2 \ud\bu.	
\end{align*}
The same holds for all $L^p$ distances with $p\geq 1$, where 
$$\cD_{L^p}(Q,Q') := \Big(\int\nolimits_{\mathbb{I}^d} |Q(\bu)-Q'(\bu)|^p \ud\bu\Big)^{\frac{1}{p}}.$$ 
Distances with these properties are of particular interest in the theory of minimum distance and minimum contrast estimation, where---as opposed to maximum likelihood methods---parameters of distributions are estimated based on a statistical distance between the empirical and the estimated distribution. 
These estimators have favorable properties in terms of efficiency and robustness; \textit{cf.} \citet[Chapter 2.8]{spokoiny2015}.

The computation of the bounds $\underline{Q}^{\cD,\delta}$ and $\overline{Q}^{\cD,\delta}$ in Theorem \ref{prescribedDistance} involves the solution of optimization problems, which can be computationally intricate depending on the distance $\cD$. 
An explicit representation of the bounds is thus highly valuable for applications. 
The following result shows that in the particular case of the Kolmogorov--Smirnov distance the bounds can be computed explicitly. 

\begin{lemma}
\label{explicitBounds}
Let $C^*$ be a $d$-copula, $\delta \in \R_+$, and consider the Kolmogorov--Smirnov distance $\cD_{\emph{KS}}$.
Then
\begin{align*}
\underline{Q}^{\cD_{\emph{KS}},\delta}(\bu) 
	= \max\big\{C^*(\bu)-\delta,W_d(\bu)\big\} 
\quad \text{ and } \quad
\overline{Q}^{\cD_{\emph{KS}},\delta}(\bu) 
	= \min\big\{C^*(\bu)+\delta,M_d(\bu)\big\}.
\end{align*}
\end{lemma}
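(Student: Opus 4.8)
The plan is to evaluate, for each fixed $\bu\in\mathbb{I}^d$, the Kolmogorov--Smirnov distances occurring inside the two optimisation problems that define $\underline{Q}^{\cD_{\mathrm{KS}},\delta}(\bu)$ and $\overline{Q}^{\cD_{\mathrm{KS}},\delta}(\bu)$ in Theorem~\ref{prescribedDistance}, obtain them in closed form as functions of $\alpha\in\mathbb S(\bu)$, and then read off the extremal admissible $\alpha$.

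First I would treat the lower bound. Since $(\upQ^{\{\bu\},\alpha}\wedge\refC)(\bv)\le C^*(\bv)$ for every $\bv$, the supremum defining $\cD_{\mathrm{KS}}$ loses its absolute value; inserting the explicit formula from \eqref{bounds}, namely
\[
\upQ^{\{\bu\},\alpha}(\bv)=\min\Big(M_d(\bv),\ \alpha+\sum_{i=1}^d(v_i-u_i)^+\Big),
\]
and using the Fréchet--Hoeffding inequality $C^*(\bv)\le M_d(\bv)$ (so the $M_d$-branch can never contribute a positive difference), one gets
\[
\cD_{\mathrm{KS}}\big(\upQ^{\{\bu\},\alpha}\wedge\refC,\refC\big)
=\Big(\sup_{\bv\in\mathbb{I}^d}\Big[C^*(\bv)-\sum_{i=1}^d(v_i-u_i)^+\Big]-\alpha\Big)^{\!+}.
\]
The crux is the identity $\sup_{\bv\in\mathbb{I}^d}\big[C^*(\bv)-\sum_i(v_i-u_i)^+\big]=C^*(\bu)$: the inequality ``$\ge$'' follows by taking $\bv=\bu$, and ``$\le$'' follows by setting $\bw:=\bv\wedge\bu$ componentwise, noting $C^*(\bw)\le C^*(\bu)$ by monotonicity $(\mathbf{QC2})$ and $C^*(\bv)-C^*(\bw)\le\sum_i(v_i-w_i)=\sum_i(v_i-u_i)^+$ by the Lipschitz property $(\mathbf{QC3})$. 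Hence $\cD_{\mathrm{KS}}(\upQ^{\{\bu\},\alpha}\wedge\refC,\refC)=(C^*(\bu)-\alpha)^+$, which is $\le\delta$ precisely when $\alpha\ge C^*(\bu)-\delta$; intersecting with $\mathbb S(\bu)=[W_d(\bu),M_d(\bu)]$, a nonempty interval since $C^*(\bu)-\delta\le C^*(\bu)\le M_d(\bu)$, and taking the minimum yields $\underline{Q}^{\cD_{\mathrm{KS}},\delta}(\bu)=\max\{C^*(\bu)-\delta,W_d(\bu)\}$.

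For the upper bound the argument is the mirror image: from $(\downQ^{\{\bu\},\alpha}\vee\refC)(\bv)\ge C^*(\bv)$ and
\[
\downQ^{\{\bu\},\alpha}(\bv)=\max\Big(W_d(\bv),\ \alpha-\sum_{i=1}^d(u_i-v_i)^+\Big),
\]
together with $W_d(\bv)\le C^*(\bv)$, one reduces to the identity $\inf_{\bv\in\mathbb{I}^d}\big[C^*(\bv)+\sum_i(u_i-v_i)^+\big]=C^*(\bu)$, proved exactly as before with $\bw:=\bv\vee\bu$ in place of $\bv\wedge\bu$. This gives $\cD_{\mathrm{KS}}(\downQ^{\{\bu\},\alpha}\vee\refC,\refC)=(\alpha-C^*(\bu))^+\le\delta\iff\alpha\le C^*(\bu)+\delta$, and maximising over $\mathbb S(\bu)$ produces $\overline{Q}^{\cD_{\mathrm{KS}},\delta}(\bu)=\min\{C^*(\bu)+\delta,M_d(\bu)\}$. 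The only delicate point is the pair of sup/inf identities above; once these are in place, and once one has observed that the $M_d$ (resp.\ $W_d$) branch of the constrained Fréchet--Hoeffding bound is inactive thanks to $W_d\le C^*\le M_d$, the remainder is elementary scalar manipulation, while the quasi-copula property of the two bounds is already furnished by Theorem~\ref{prescribedDistance}.
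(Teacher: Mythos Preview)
Your proof is correct and follows essentially the same route as the paper: remove the absolute value using $\upQ^{\{\bu\},\alpha}\wedge\refC\preceq\refC$, discard the $M_d$-branch via the Fr\'echet--Hoeffding inequality, reduce to the identity $\sup_{\bv}\big[C^*(\bv)-\sum_i(v_i-u_i)^+\big]=C^*(\bu)$, and solve for the extremal admissible $\alpha$. In fact your justification of that sup identity (via $\bw=\bv\wedge\bu$, monotonicity and the Lipschitz property) is more explicit than the paper's, which simply asserts it.
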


\begin{proof}
Let us start with the lower bound $\underline{Q}^{\cD_{\text{KS}},\delta}$. 
Due to ${\upQ^{\{\bu\},\alpha}\wedge\refC} \preceq \refC$ for all $\alpha\in[W_d(\bu),M_d(\bu)]$, it holds that
\begin{align*}
\cD_{\text{KS}}\Big({\upQ^{\{\bu\},\alpha}\wedge\refC},\refC\Big) 
	&= \sup_{\bx\in\mathbb{I}^d} \Big|\big({\upQ^{\{\bu\},\alpha}\wedge\refC}\big)(\bx)-\refC(\bx)\Big| 
	 = \sup_{\bx\in\mathbb{I}^d} \Big\{ \refC(\bx)-\upQ^{\{\bu\},\alpha}(\bx) \Big\}.
\end{align*}
Since $\sup_{\bx\in\mathbb{I}^d} \big\{\refC(\bx)-\upQ^{\{\bu\},\alpha}(\bx)\big\} = 0$ when $\alpha> C^*(\bu)$, we can assume w.l.o.g. that the minimum is attained for $\alpha\leq C^*(\bu)$.
Hence
\begin{align*}
\min\Big\{\alpha \in [W_d(\bu),M_d(\bu)] &\colon \cD_{\text{KS}}\Big({\upQ^{\{\bu\},\alpha}\wedge\refC},\refC\Big) \leq \delta\Big\} \\
	&= \min\Big\{\alpha \in \mathbb [W_d(\bu),C^*(\bu)] \colon 
		\sup_{\bx\in\mathbb{I}^d} \Big\{\refC(\bx)-\upQ^{\{\bu\},\alpha}(\bx)\Big\} \leq \delta\Big\}.
\end{align*}
Then, using the definition of $\upQ^{\{\bu\},\alpha}$ in \eqref{bounds}, we obtain
\begin{align*}
\sup_{\bx\in\mathbb{I}^d} \Big\{\refC(\bx) - \upQ^{\{\bu\},\alpha}(\bx)\Big\}
	&= \sup_{\bx\in\mathbb{I}^d} \Big\{\refC(\bx) - \min\Big\{M_d(\bx),\alpha+\sum_{i=1}^d (x_i-u_i)^+\Big\}\Big\}\\
	&= \sup_{\bx\in\mathbb{I}^d} \Big\{\refC(\bx) - \alpha-\sum_{i=1}^d (x_i-u_i)^+\Big\}\\
	&= \sup_{\bx\in\mathbb{I}^d} \Big\{\refC(\bx) -\sum_{i=1}^d (x_i-u_i)^+\Big\} - \alpha
	 = C^*(\bu)-\alpha,
\end{align*}
where the second equality holds due to the fact that $C^*(\bx)-M_d(\bx)\leq 0$ for all $\bx \in \mathbb{I}^d$. 
Hence, we conclude that
\begin{align*}
\downQ^{\cD_{\text{KS}},\delta}(\bu) 
	&= \min\big\{\alpha \in \mathbb [W_d(\bu),C^*(\bu)] \colon C^*(\bu)-\alpha \leq \delta\big\} \\
	&= \min\big\{\alpha \in \mathbb [W_d(\bu),C^*(\bu)] \colon C^*(\bu)-\delta \leq \alpha\big\} 
	 = \max\big\{C^*(\bu)-\delta,W_d(\bu)\big\}.
\end{align*}
The proof for the upper bound $\overline{Q}^{\cD_{\text{KS}},\delta}$ is analogous, therefore omitted.
\end{proof}

Analogously to Theorem \ref{prescribedDistance}, one can also consider the situation where information on the survival copula is available. Note that each statistical distance that measures the discrepancy between quasi-copulas can easily be translated into a distance on quasi-survival functions, \textit{i.e.} if $\cD$ is a statistical distance on $\mathcal{Q}^d\times\mathcal{Q}^d$, then $(\widehat{Q},\widehat{Q}')\mapsto \cD\big(\widehat{Q}(\mathbf 1-\cdot),\widehat{Q}'(\mathbf 1-\cdot)\big)$ defines a distance on the set of survival copulas or quasi-survival functions. 

\begin{corollary}
\label{prescribedDistanceSurvival}
Let $C^*$ be a $d$-copula and $\cD$ be a statistical distance which is continuous with respect to the pointwise convergence of quasi-copulas, monotonic with respect to the upper orthant order and min/max-stable. 
Consider the set $\widehat{\mathcal{Q}}^{\cD,\delta} = \big\{\widehat{Q}\in\widehat{\mathcal{Q}}^d\colon \cD(\widehat{Q},\widehat{C}^*) \leq \delta\big\}$ for $\delta\in\mathbb{R}_+$.
Then
\begin{align*}
 \underline{\widehat{Q}}^{\cD,\delta}(\bu)
 	:=& \min\Big\{\alpha\in\mathbb S(\bu)\colon \cD\Big(\widehat{\overline{Q}}^{\{\bu\},\alpha}\wedge\refC,\refC\Big) \leq \delta\Big\} 
	 = \min\big\{C(\bu)\colon C\in\widehat{\mathcal{Q}}^{\cD,\delta}\big\} \\
 \widehat{\overline{Q}}^{\cD,\delta}(\bu)
 	:=& \max\Big\{\alpha\in\mathbb S(\bu)\colon \cD\Big(\widehat{\underline{Q}}^{\{\bu\},\alpha}\vee\refC,\refC\Big) \leq \delta\Big\} 
	 = \max\big\{C(\bu)\colon C\in\widehat{\mathcal{Q}}^{\cD,\delta}\big\}.
\end{align*}
\end{corollary}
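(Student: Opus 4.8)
The plan is to deduce this from Theorem~\ref{prescribedDistance} by means of the reflection $\rho\colon\mathbb{I}^d\to\mathbb{I}^d$, $\rho(\bu)=\mathbf 1-\bu$, which exchanges quasi-copulas with quasi-survival functions. Precomposition with $\rho$ is a bijection between $\mathcal Q^d$ and $\widehat{\mathcal Q}^d$: if $Q\in\mathcal Q^d$ then $Q\circ\rho$ is a quasi-survival function by definition, and conversely. In particular $\rho$ sends $\widehat{C}^*$ to the survival copula $\widehat{C}^*(\mathbf 1-\cdot)$, which is a genuine $d$-copula since $C^*$ is; hence Theorem~\ref{prescribedDistance} is applicable with this copula as reference.

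The first step is to transport the distance. I would set $\widetilde\cD(Q,Q'):=\cD(Q\circ\rho,Q'\circ\rho)$ for $Q,Q'\in\mathcal Q^d$, so that $\widetilde\cD$ is precisely $\cD$ read through the reflection, and then check that $\widetilde\cD$ inherits the hypotheses of Theorem~\ref{prescribedDistance}: continuity with respect to pointwise convergence is immediate because $\rho$ is a homeomorphism of $\mathbb{I}^d$; monotonicity of $\widetilde\cD$ with respect to the lower orthant order on $\mathcal Q^d$ amounts to monotonicity of $\cD$ with respect to the upper orthant order on $\widehat{\mathcal Q}^d$, since precomposition with $\rho$ carries the pointwise inequalities defining the one order onto those defining the other; and min/max-stability transfers because $(Q\wedge Q')\circ\rho=(Q\circ\rho)\wedge(Q'\circ\rho)$ and likewise for $\vee$, while the pointwise minimum and maximum of two quasi-copulas are again quasi-copulas, as already used in the proof of Theorem~\ref{prescribedDistance} via \citet{Rodriguez_Ubeda_2004}.

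With $\widetilde\cD$ and the reference copula $\widehat{C}^*(\mathbf 1-\cdot)$ in hand, the remaining step is bookkeeping. Since $\widehat Q\in\widehat{\mathcal Q}^{\cD,\delta}$ if and only if $\widehat Q\circ\rho\in\mathcal Q^{\widetilde\cD,\delta}$, applying Theorem~\ref{prescribedDistance} and evaluating at $\mathbf 1-\bu$ gives $\min\{\widehat Q(\bu)\colon\widehat Q\in\widehat{\mathcal Q}^{\cD,\delta}\}=\underline{Q}^{\widetilde\cD,\delta}(\mathbf 1-\bu)$; inserting the formula of Theorem~\ref{prescribedDistance} and rewriting $\overline{Q}^{\{\mathbf 1-\bu\},\alpha}\circ\rho=\widehat{\overline{Q}}^{\{\bu\},\alpha}$ and $\widehat{C}^*(\mathbf 1-\cdot)\circ\rho=\widehat{C}^*$ by means of \eqref{survivalBounds} yields the asserted expression for $\underline{\widehat{Q}}^{\cD,\delta}$, and the expression for $\widehat{\overline{Q}}^{\cD,\delta}$ follows in the same way with $\vee$ replacing $\wedge$ and $\widehat{\underline{Q}}^{\{\bu\},\alpha}$ replacing $\widehat{\overline{Q}}^{\{\bu\},\alpha}$. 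The one point requiring genuine care rather than mechanical translation is ensuring that the transported distance really inherits all three hypotheses — in particular pairing ``monotonic with respect to the upper orthant order'' (the natural order on survival functions) correctly with the lower orthant monotonicity demanded by Theorem~\ref{prescribedDistance} — and keeping the admissible range of $\alpha$ and the reference object tracked consistently through the change of variables; no estimate beyond Theorem~\ref{prescribedDistance} itself is needed.
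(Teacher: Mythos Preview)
Your proposal is correct. The paper omits the proof entirely, saying only that it is ``analogous to the proof of Theorem~\ref{prescribedDistance}'', which in context means re-running the argument with quasi-survival functions in place of quasi-copulas; your reflection $\rho(\bu)=\mathbf 1-\bu$ is precisely the device that makes this analogy rigorous, and by transporting the distance and the reference object through $\rho$ you deduce the corollary directly from the theorem rather than repeating its steps. The two routes are equivalent in content --- the reflection is the canonical dictionary between the two settings, and the paper's remark just before the corollary about translating a distance on quasi-copulas into one on quasi-survival functions via $(\widehat Q,\widehat Q')\mapsto\cD(\widehat Q(\mathbf 1-\cdot),\widehat Q'(\mathbf 1-\cdot))$ is exactly the inverse of your definition of $\widetilde\cD$ --- but your version has the mild advantage of invoking Theorem~\ref{prescribedDistance} as a black box instead of reproducing its proof.
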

The proof is analogous to the proof of Theorem \ref{prescribedDistance} and is therefore omitted.

\section{Numerical examples and illustrations}
\label{numerics}

In this section we apply the results deduced in the previous parts in order to derive bounds on the Value-at-Risk that account for additional  information on the dependence structure. 
In particular, we are able to include different types of partial dependence information that are both relevant for practical applications and have not been considered in the literature so far. 

The first example illustrates the improvement achieved by including extreme value information in the computation of the VaR bounds.
The setting is described in Section \ref{prescribedMax}, while a useful reduction argument is deferred to Appendix \ref{sec:reduction-arg}.

\begin{example}
\label{exExtremeValue}
We consider a homogeneous portfolio $\mathbf{X} = (X_1,\dots,X_6)$ where the marginals are Pareto-2 distributed, \textit{i.e.} $X_1,\dots,X_6\sim\text{Pareto}_2$, and analyze the improvement of the VaR bounds when additional information on the dependence structure is taken into account. 
In particular, we assume that the distributions $G_n$ of the maxima $\max_{j\in J_n} X_j$ are known for $J_1=\{1,2,3\}$ and $J_2=\{4,5,6\}$. 
In this case, it follows from Theorem \ref{boundMax} and equation \eqref{computeBoundsRA}, that
\begin{align}
\label{exExtremeValueEq1}
\begin{split}
& \sup_{(\alpha_1,\dots,\alpha_8)\in\underline{\mathcal{A}}} \underline{RA}(\alpha_1 Y_1,\alpha_2 Y_2,\alpha_3 X_1,\dots,\alpha_{8} X_6) \\
& \quad \leq \inf\Big\{\P(X_1+\cdots+X_6\leq s)\colon X_1,\dots,X_6\sim\text{Pareto}_2, Y_n \sim G_n, n = 1,2\Big\},
\end{split}
\end{align}
and analogously
\begin{align}
\label{exExtremeValueEq2}
\begin{split}
& \inf_{(\alpha_1,\dots,\alpha_8)\in\overline{\mathcal{A}}}  \overline{RA}(\alpha_1 Y_1,\alpha_2 Y_2,\alpha_3 X_1,\dots,\alpha_{8} X_6)\\
&\quad \geq \sup\Big\{\P(X_1+\cdots+X_6\leq s)\colon X_1,\dots,X_6\sim\text{Pareto}_2, Y_n \sim G_n, n = 1,2\Big\},
\end{split}
\end{align}
where $\underline{RA}(X,Y)$ and $\overline{RA}(X,Y)$ denote the lower and upper bound on $\P(X+Y\le\cdot)$ computed using the rearrangement algorithm (RA).

We have chosen to include the marginals $X_1,\dots,X_6$ in the optimization problems on the left hand side of \eqref{exExtremeValueEq1} and \eqref{exExtremeValueEq2} in order not to lose useful information about the marginal distributions, although the condition $\cup_n J_n=\{1,\dots,6\}$ of Theorem \ref{boundMax} is already satisfied. 
Note that the distribution of the maximum of every individual variable is trivially known and equals the respective marginal distribution; \textit{i.e.} $\max\{X_i\} = X_i\sim F_i$ for $i=1,\dots,d$. 

The solution of the optimization problems in \eqref{exExtremeValueEq1} and \eqref{exExtremeValueEq2} yields bounds on the VaR of the sum $X_1+\cdots+X_6$ when the distribution of the partial maxima is taken into account. 
Table \ref{tab:extremeValue} shows the confidence level $\alpha$ in the first column and the VaR bounds without additional information in the second column, \textit{i.e.} the unconstrained bounds. 
The third and fourth columns contain the improved VaR bounds that account for the extreme value information, as well as the improvement over the unconstrained bounds in percentage terms, \textit{i.e.} how much narrower the interval between the lower and the upper improved VaR bounds is relative to the same interval between the unconstrained bounds. 
In order to illustrate our method, we need to know the distribution of the partial maxima. 
To this end, we assume that the vectors $(X_1,X_2,X_3)$ and $(X_4,X_5,X_6)$ have the same Student-$t$ copula with equicorrelation matrices and two degrees of freedom, and numerically determine the distribution of $\max\{X_1,X_2,X_3\}$ and $\max\{X_4,X_5,X_6\}$. 
In the third column it is assumed that the pairwise correlations of $(X_1,X_2,X_3)$ and $(X_4,X_5,X_6)$ are equal to 0.9 and in the fourth column the pairwise correlations amount to 0.7 respectively.

The bounds in this table, both without and with additional information, have been computed using the RA, since the RA produces essentially sharp VaR bounds in the absence of additional information. 
Another possibility for the computation of the unconstrained bounds would be to use analytical results available \textit{e.g.} in \citet{Jakobsons_Han_Wang_2016,Puccetti_Rueschendorf_2013} and \citet{wang2013}.
We have refrained from doing so, since the computation of all bounds using the same algorithm makes the comparison of the results more credible, as numerical artifacts have been eliminated.

\begin{table}[h]
\begin{center}
\begin{tabular}{c|cc|ccc|ccc} 
\hline \hline
$\alpha$ & lower & upper & \shortstack{\vspace{0.1cm}\\ lower\\ improved} & \shortstack{\vspace{0.1cm}\\ upper\\ improved}  & \shortstack{\vspace{0.1cm}\\ impr.\\ \%} & \shortstack{\vspace{0.1cm}\\ lower\\ improved} & \shortstack{\vspace{0.1cm}\\ upper\\ improved}  & \shortstack{\vspace{0.1cm}\\ impr.\\ \%}\\ 
\hline
95\% & 3.8 & 47.8 & 3.8 & 39.5 & 19.7 & 4.9 & 44.8 & 9.1 \\ 
99\% & 4.9 & 114.0 & 11.0 & 96.1 & 22.0 & 12.4 & 107.8 & 12.5 \\ 
99.5\% & 5.2 & 163.7 & 16.1 & 138.5 & 22.7 & 18.0 & 155.1 & 13.5 \\ 
\hline \hline
\end{tabular}
\caption{Unconstrained and improved VaR bounds for the sum $X_1+\cdots+X_6$ with known distribution of partial maxima for different confidence levels.}
\label{tab:extremeValue}
\end{center}
\end{table}

The following observations ensue from this example: 
(i) The addition of partial dependence information allows to notably reduce the spread between the upper and lower bounds.
	Indeed, the bounds with additional information are finer than the unconstrained bounds resulting from the rearrangement algorithm, which are essentially sharp in this setting.  
	Nevertheless, the model risk is still not negligible. 
(ii) The level of improvement \textit{increases} with increasing confidence level $\alpha$.
	 This is in contrast to related results in the literature, see \textit{e.g.} \cite{bernard2015,bignozzi2015}, where the improvement typically decreases as the confidence level increases, and is an advantage of the present methodology.
(iii) The improvement is more pronounced in the high-correlation scenario, and for the lower bound. 
	  These two observations are in accordance with the related literature; \textit{e.g.} \citep{puccetti2016} report also a more pronounced improvement of the VaR bounds in the presence of strong positive dependence (especially in the tails), while \cite{bernard2015} report a more noticeable improvement of the lower relative to the upper VaR bound.
\end{example}

In the next example we combine the results of Section \ref{boundsOnCopula} with Proposition \ref{varBoundsMax} in order to derive improved bounds on the VaR of the maximum of risks over a class of copulas in the proximity of a reference copula.

\begin{example}
\label{exMaximumMinimum}
Let us consider a homogeneous portfolio of three risks $(X_1,X_2,X_3)$ where the marginals are again Pareto-2 distributed, \textit{i.e.} $X_1,X_2,$ $X_3\sim\text{Pareto}_2$. 
We assume that the reference copula $C^*$ is a Student-$t$ copula with equicorrelation matrix and two degrees of freedom, and are interested in computing bounds on the VaR over the class of models in the $\delta$-neighborhood of $C^*$ as measured by the Kolmogorov--Smirnov distance.
In other words, we consider the class 
$$\mathcal{C}^{\cD_{\text{KS}},\delta}:= \big\{C\in\mathcal{C}^d\colon \cD_{\text{KS}}(C,C^*) \leq \delta \big\},$$
and using Theorem \ref{prescribedDistance} and Lemma \ref{explicitBounds} we arrive at bounds on the copulas in $\mathcal{C}^{\cD_{\text{KS}},\delta}$. 

Then, we apply Proposition \ref{varBoundsMax} using the bounds $\underline{Q}^{\cD_{\text{KS}},\delta}$ and $\overline{Q}^{\cD_{\text{KS}},\delta}$ obtained above in order to compute bounds on the VaR of the maximum $\max\{X_1,X_2,X_3\}$ over the class of models in the vicinity of $C^*$. 
Table \ref{tab:distance1} shows the confidence level and the sharp unconstrained (\textit{i.e.} marginals-only) VaR bounds in the first two columns. 
The third, fourth and fifth column contain the upper and lower VaR bounds which use the information on the distance from $C^*$, for different levels of the threshold $\delta$, as well as the improvement over the unconstrained bounds in percentage terms, \textit{i.e.} how much narrower the interval between the lower and the upper improved VaR bounds is relative to the same interval between the unconstrained bounds. 
In this computation we assume that the pairwise correlation of the $t$-copula $C^*$ equals 0.9. 
The results are rounded to one decimal digit for the sake of legibility.

\begin{table}[h]
\begin{center}
\resizebox{\textwidth}{!}{%
\begin{tabular}{>{\centering}m{0.7cm}|c|cc|cc|cc} 
\hline \hline 
$\alpha$ & (lower : upper)& \shortstack{\vspace{0.1cm}\\ $\delta = 0.001$ \\ (lower : upper)}  & \shortstack{\vspace{0.1cm}\\ impr.\\ \%} &   \shortstack{\vspace{0.1cm}\\ $\delta = 0.005$\\ (lower : upper)}  & \shortstack{\vspace{0.1cm}\\ impr.\\ \%} & \shortstack{\vspace{0.1cm}\\ $\delta = 0.01$\\ (lower : upper)} & \shortstack{\vspace{0.1cm}\\ impr.\\ \%}\\ 
\hline
95\% & (1.4 : 6.8) & (3.6 : 4.6) & 81 & (2.5 : 4.7) & 59 & (2.3 : 5.0) & 50 \\ 
97\% & (2.0 : 9.1) & (4.8 : 6.2) & 78 & (3.5 : 6.7) & 55 & (3.2 : 7.7) & 37\\ 
99\% & (3.0 : 16.4) & (9.0 : 11.8) & 79 & (6.4 : 15.5) & 32 & (5.2 : 16.2) & 18 \\ 
\hline \hline
\end{tabular}
}
\caption{Unconstrained and improved VaR bounds for $\max\{X_1,X_2,X_3\}$ given a threshold on the distance from the reference $t$-copula $C^*$ with pairwise correlation equal to 0.9.}
\label{tab:distance1}
\end{center}
\end{table}

The next table is analogous to Table \ref{tab:distance1}, but this time weaker dependence is induced by the reference model, assuming that the pairwise correlations in the $t$-copula $C^*$ are equal to 0.6.

\begin{table}[h]
\begin{center}
\resizebox{\textwidth}{!}{%
\begin{tabular}{>{\centering}m{0.7cm}|c|cc|cc|cc} 
\hline \hline 
$\alpha$ & (lower : upper)& \shortstack{\vspace{0.1cm}\\ $\delta = 0.001$ \\ (lower : upper)}  & \shortstack{\vspace{0.1cm}\\ impr.\\ \%} &   \shortstack{\vspace{0.1cm}\\ $\delta = 0.005$\\ (lower : upper)}  & \shortstack{\vspace{0.1cm}\\ impr.\\ \%} & \shortstack{\vspace{0.1cm}\\ $\delta = 0.01$\\ (lower : upper)} & \shortstack{\vspace{0.1cm}\\ impr.\\ \%}\\ 
\hline
95\% & (1.4 : 6.8) & (3.5 : 5.3) & 67 & (1.5 : 5.6) & 24 & (1.4 : 5.8) & 19 \\ 
97\% & (2.0 : 9.1) & (4.8 : 7.2) & 66 & (2.3 : 7.8) & 23 & (2.0 : 8.8) & 4\\ 
99\% & (3.0 : 16.4) & (9 : 14) & 62 & (4.2 : 16.4) & 9 & (3.4 : 16.4) & 3 \\ 
\hline \hline
\end{tabular}
}
\caption{Unconstrained and improved VaR bounds for $\max\{X_1,X_2,X_3\}$ given a threshold on the distance from the reference $t$-copula $C^*$ with pairwise correlation equal to 0.6.}
\label{tab:distance2}
\end{center}
\end{table}

Let us point out that the bounds in Proposition \ref{varBoundsMax}, hence also in the second column of Tables \ref{tab:distance1} and \ref{tab:distance2}, are sharp when no dependence information is available, \textit{i.e.} when $\underline{Q} = W_3$ and $\overline{Q} = M_3$. 
This is due to the fact that $M_3$ is a copula and $W_3$ is pointwise best-possible. 

The observations made for the previous example are largely valid also in the present one, namely: 
(i) The addition of partial information reduces significantly the spread between the upper and lower bounds.
	This reduction is more pronounced as the threshold $\delta$ decreases; in other words, the more reliable the reference model, the more pronounced the reduction of model risk.
  	These results should be compared, qualitatively, with analogous results for the `trusted region' in \cite{bernard2015}.
(ii) The level of improvement decreases in this case, sometimes dramatically, with increasing confidence level $\alpha$.
	 In particular, for $\alpha = 99\%$ the improvement was small, especially for large values of $\delta$.
(iii) The improvement is more pronounced in the high-dependence scenario, with improvements over the sharp unconstrained bounds of up to 81\%.
\end{example}

\begin{remark}
The approach to compute VaR bounds over copulas in the vicinity of a reference model, as in Example \ref{exMaximumMinimum}, is applicable to statistical distances fulfilling the properties in Definition \ref{statisticalDistance}. 
Transportation distances, such as the Wasserstein distance, are typically not monotonic w.r.t. the orthant order, hence, our approach does not apply to them. 
A different method using neural networks to obtain risk bounds when information w.r.t. transportation distances is available, was recently presented by \citet*{eckstein2018}.
\end{remark}

\appendix
\section{On the computation of the bounds with known distribution of minima or maxima}
\label{sec:reduction-arg}

Let us first recall the setting of Section \ref{prescribedMax}, where we showed that
\begin{align*}
\underline{m}_{\mathcal{E},\max}(s) \le \P(X_1+\dots+X_d \le s) \le \overline{M}_{\mathcal{E},\max}(s);
\end{align*}
see Theorem \ref{boundMax}.
In order to compute the bounds $\underline{m}_{\mathcal{E},\max}(s)$ and $\overline{M}_{\mathcal{E},\max}(s)$, we first need to choose a method to estimate the probability $\P(\alpha_1Y_1+\cdots+\alpha_mY_m\leq s)$ for fixed $(\alpha_1,\dots,\alpha_m)$ in $\overline{\mathcal{A}}$ or $\underline{\mathcal{A}}$ and $Y_i\sim G_i$, $i=1,\dots,m$. 
This corresponds to a standard Fr\'echet problem over a class of distributions with fixed marginals. 
Thus, two approaches lend themselves naturally for this task: an approximation by the standard bounds given in \eqref{standardBounds} or by the rearrangement algorithm. 
Indeed, we can use the standard bounds in \eqref{standardBounds} to estimate
\begin{align*}
\max\Big\{0,\sup_{\mathcal{U}(s)}\sum_{i=1}^m G^-_i\Big(\frac{u_i}{\alpha_i}\Big)-m+1\Big\}
  &\le \P(\alpha_1Y_1+\cdots+\alpha_mY_m\leq s) \\
  &\qquad \le \min\Big\{1,\inf_{\mathcal{U}(s)}\sum_{i=1}^m G^-_i\Big(\frac{u_i}{\alpha_i}\Big)\Big\},
\end{align*}
where $\mathcal{U}(s) = \{(u_1,\dots,u_m)\in\mathbb{R}^m\colon u_1+\cdots+u_m=s\}$ and $G_i^-$ denotes the left-continuous version of $G_i$. 
Then, the bounds $\underline{m}_{\mathcal{E},\max}$ and $\overline{M}_{\mathcal{E},\max}$ are estimated by
\begin{align}
\label{computeBounds}
\begin{split}
&\underline{m}_{\mathcal{E},\max}(s) 
  \geq \sup_{(\alpha_1,\dots,\alpha_m)\in\underline{\mathcal{A}}} \max\Big\{0,\sup_{\mathcal{U}(s)}\sum_{i=1}^m G^-_i\Big(\frac{u_i}{\alpha_i}\Big)-m+1\Big\}, \\
&\overline{M}_{\mathcal{E},\max}(s) 
  \leq \inf_{(\alpha_1,\dots,\alpha_m)\in\overline{\mathcal{A}}} \min\Big\{1,\inf_{\mathcal{U}(s)}\sum_{i=1}^m G^-_i\Big(\frac{u_i}{\alpha_i}\Big)\Big\}.
\end{split}
\end{align}
Similarly, for fixed $(\alpha_1,\dots,\alpha_m)\in\underline{\mathcal{A}}$, the rearrangement algorithm allows us to approximate the bound
\begin{align}
\label{lowerRA}
\inf\big\{\P(\alpha_1Y_1+\cdots+\alpha_mY_m\leq s)\colon Y_n\sim G_n, n \in\mathcal{J}\big\},
\end{align}
while for $(\alpha_1,\dots,\alpha_m)\in\overline{\mathcal{A}}$ we can approximate
\begin{align}
\label{upperRA}
\sup\big\{\P(\alpha_1Y_1+\cdots+\alpha_mY_m\leq s)\colon Y_n\sim G_n, n \in\mathcal{J}\big\}.
\end{align}
To this end, we need to suitably discretize the variables $\alpha_1Y_1,\cdots,\alpha_mY_m$ and apply the rearrangement algorithm to the resulting matrix; for further details see \cite*{embrechts2013}. 
Denoting the lower bound in \eqref{lowerRA} computed by means of the rearrangement algorithm by $\underline{RA}(\alpha_1 Y_1,\dots,\alpha_m Y_m)$ and analogously the upper bound in \eqref{upperRA} by $\overline{RA}(\alpha_1 Y_1,\dots,\alpha_m Y_m)$, we thus obtain the following estimates:
\begin{align}
\label{computeBoundsRA}
\begin{split}
&\underline{m}_{\mathcal{E},\max}(s) 
  \geq \sup_{(\alpha_1,\dots,\alpha_m)\in\underline{\mathcal{A}}} \underline{RA}(\alpha_1 Y_1,\dots,\alpha_m Y_m), \\
&\overline{M}_{\mathcal{E},\max}(s) 
  \leq \inf_{(\alpha_1,\dots,\alpha_m)\in\overline{\mathcal{A}}} \overline{RA}(\alpha_1 Y_1,\dots,\alpha_m Y_m).
\end{split}
\end{align}
Let us stress that the RA has favorable numerical properties compared to the improved standard bounds. 
In particular, the bounds $\underline{RA}(\alpha_1 Y_1,\dots,\alpha_m Y_m)$ and $\overline{RA}(\alpha_1 Y_1,\dots,\alpha_m Y_m)$ can be computed very quickly for a reasonably fine discretization, thus the subsequent optimization over the set $\underline{\mathcal{A}}$ and $\overline{\mathcal{A}}$ can be performed much faster.

\section{Proof of Proposition \ref{varBoundsMax}}
\label{app:easy-proof}

\begin{proof}
Let $\varphi(x_1,\dots,x_d) = \max\{x_1,\dots,x_d\}$, then for any copula $C$ we have that 
\begin{align*}
\P_C(\max\{X_1,\dots,X_d\}<s) 
	= \P_C(X_1<s,\dots,X_d<s)
	= C(F_1(s),\dots,F_d(s)),
\end{align*}
using Sklar's Theorem for the last equality.
Hence, it follows immediately that
\begin{align*}
m_{\underline{Q},\max}(s) 
	&= \inf\big\{C(F_1(s),\dots,F_d(s))\colon \underline{Q}\preceq C\big\}
	\geq \underline{Q}(F_1(s),\dots,F_d(s)) \\
M_{\overline{Q},\max}(s)
	&= \sup\big\{C(F_1(s),\dots,F_d(s))\colon C\preceq \overline{Q}\big\}
	\leq \overline{Q}(F_1(s),\dots,F_d(s)).
\end{align*}
Moreover, since
\begin{align*}
\mathcal V^<_{\max} (s)
	&= \{ (x_1,\dots,x_d)\in\R^d: \max\{x_1,\dots,x_d\}<s\}\\
	&= \{ (x_1,\dots,x_d)\in\R^d: x_1<s,\dots,x_d<s\},
\end{align*}
we get from the improved standard bounds \eqref{eq:ISB} that 
\begin{align*}
\underline{m}_{\underline{Q},\max}(s) 
	&= \sup_{\mathcal V^<_{\max} (s)}\ \underline{Q}\big(F_1(x_1),\dots,F_{d}(x_{d})\big)
	= \underline{Q}(F_1(s),\dots,F_d(s)),
\end{align*}
where the last equality follows from the fact that $\underline{Q}$ is a quasi-copula, hence it is increasing in each component such that the supremum is attained at $(F_1(s),\dots,F_d(s))$.
The proof for the min operation is completely analogous and therefore omitted.
\end{proof}

\bibliographystyle{abbrvnat}
\bibliography{bib}

\end{document}